\newcommand{\C}{\mathbb{C}}
\newcommand{\D}{\mathcal{D}}
\newcommand{\Es}{\mathscr{E}}
\newcommand{\F}{\mathcal{F}}
\renewcommand{\H}{\mathcal{H}}
\renewcommand{\L}{\mathcal{L}}
\newcommand{\N}{\mathbb{N}}
\newcommand{\R}{\mathbb{R}}
\renewcommand{\S}{\mathcal{S}}
\newcommand{\T}{\mathbb{T}}
\newcommand{\V}{\mathcal{V}}
\newcommand{\Z}{\mathbb{Z}}
\DeclareMathOperator{\trace}{Tr}
\renewcommand{\epsilon}{\varepsilon}
\newcommand{\ubar}[1]{\underline{#1}}
\DeclareMathOperator{\imm}{i}
\newcommand{\de}{\partial}
\newcommand{\set}[1]{\left\{#1\right\}}
\newcommand{\pa}[1]{\left(#1\right)}
\newcommand{\abs}[1]{\left|#1\right|}
\newcommand{\norm}[1]{\left\|#1\right\|}
\newcommand{\brak}[1]{\left\langle#1\right\rangle}
\newcommand{\expt}[1]{\mathbb{E}\left[#1\right]}
\newtheorem{thm}{Theorem}[section]
\newtheorem{lem}[thm]{Lemma}
\newtheorem{prop}[thm]{Proposition}
\theoremstyle{remark}
\newtheorem{rmk}[thm]{Remark}
\numberwithin{equation}{section}
\newcounter{mrnotanum}
\title[Mean Field Decay of Correlation for Point Vortices]{Decay of Correlation Rate in the Mean Field Limit of Point Vortices Ensembles}
\author[F. Grotto]{Francesco Grotto}
  \address{Scuola Normale Superiore, Piazza dei Cavalieri, 7, 56126 Pisa, Italia}
  \email{\href{mailto:francesco.grotto@sns.it}{francesco.grotto@sns.it}}
\author[M. Romito]{Marco Romito}
  \address{Dipartimento di Matematica, Universit\`a di Pisa, Largo Bruno Pontecorvo 5, 56127 Pisa, Italia}
  \email{\href{mailto:marco.romito@unipi.it}{marco.romito@unipi.it}}
  \urladdr{\url{http://people.dm.unipi.it/romito}}
  \thanks{The second author acknowledges partial support of University of Pisa through project PRA 2018\_49.}
  \keywords{point vortices, Mean Field, decay of correlations, Sine-Gordon}
  \dedicatory{It is a very special pleasure and honour for us to contribute to the volume collecting the proceedings of the conference celebrating the 60\textsuperscript{th} birthday of Franco Flandoli.}
\date\today
\begin{document}

\begin{abstract}
 We consider the Mean Field limit of Gibbsian ensembles of 2-dimensional point vortices
 on the torus. It is a classical result that in such limit correlations functions
 converge to 1, that is, point vortices decorrelate: we compute the rate
 at which this convergence takes place by means of Gaussian integration techniques,
 inspired by the correspondence between the 2-dimensional Coulomb gas and the 
 Sine-Gordon Euclidean field theory. 
\end{abstract}

\maketitle

\section{Introduction}

Mean Field scaling limits of 2-dimensional Euler point vortices, or the equivalent 2-dimensional Coulomb gas,
are a classical topic in Statistical Mechanics, and a well established literature is devoted to them.
The present contribution to such theory consists in determining the rate at which correlations of
vortices, \emph{i.e.} charges, decay in the Mean Field limit.

We will consider the 2-dimensional torus $\T^2=(\R/2\pi\Z)^2$ as space domain;
other 2-dimensional compact manifolds without boundary, or bounded domains of $\R^2$ with smooth boundaries
can be covered by minor modifications of our arguments.
The Euler point vortices system on $\T^2$ consists in $N$ point particles at positions $x_i\in \T^2$ and intensities $\xi_i\in\R$,
satisfying the system of ordinary differential equations
\begin{equation*}
\dot x_{i,t}=\sum_{j\neq i} \xi_j \nabla^\perp G(x_{i,t},x_{j,t}),
\end{equation*}
where the interacting potential is given in terms of the zero-averaged Green function of the Laplace operator,
$\Delta G(x,y)=\delta_x(y)-1$, and $\nabla^\perp=(-\de_2,\de_1)$.
The system is defined so that the vorticity distribution $\omega=\sum \xi_i \delta_{x_i}$ solves the 2-dimensional Euler equations in weak sense, 
see \cite{marchioropulvirenti,Sc96}. It is a Hamiltonian system with respect to the conjugate coordinates $(\xi_i x_{i,1},x_{i,2})$, and Hamiltonian function
\begin{equation*}
H(x_1,\dots,x_n)=\sum_{i< j}^N \xi_i\xi_j G(x_i,x_j),
\end{equation*}
that is the interaction energy of the vortices. The time evolution, which is only defined for a full Lebesgue-measure set
of initial conditions, \cite{DuPu82,marchioropulvirenti,Gr19}, preserves the \emph{canonical Gibbs ensemble},
\begin{equation*}
\nu_{\beta,N}(dx_1,\dots,dx_n)= \frac{1}{Z_{\beta,N}} \exp\pa{-\beta H(x_1,\dots,x_n)}dx_1,\dots,dx_n.
\end{equation*}
This measure was first introduced by Onsager in this context, \cite{onsager}.
Equilibrium ensembles at high kinetic energy, which exhibit the tendency to cluster vortices 
of same sign intensities expected in a turbulent regime,
were proposed by Onsager allowing negative values of $\beta$, a parameter which does \emph{not} correspond to 
the inverse temperature of the fluid.
Unfortunately, we will not be able to treat the case $\beta<0$ with our arguments.

On the torus $\T^2$, in the Mean Field scaling limit, that is in the limit $N\rightarrow\infty$,
$\beta\rightarrow 0$, $N\beta=1$, the $k$-particle correlation function of the Gibbsian enseble
converge to $1$. In other words, in such limit the positions of vortices completely decorrelates. 
To evaluate the rate at which this happens we will resort to Gaussian integration,
transforming functionals of the Gibbsian ensemble into Gaussian integrals,
in fact exploiting techniques dating back to classical works on statistical mechanics
of the Coulomb Gas, such as \cite{Fr76,Sa78,Br78,BrFe80,Ke83,Ke84}.

\subsection{Outline and Notation}

The forthcoming section reviews the classical Mean Field theory of point vortices and states our main result,
\autoref{thm:mainresult}, whereas \autoref{sec:coulomb} contains a mostly formal exposition of the equivalence between
point vortices or 2D Coulomb gas ensembles and the Sine-Gordon Euclidean field theory.
Such correspondence is in fact the origin of the ideas in the proof of \autoref{thm:mainresult},
which is the object of \autoref{sec:decay}.

Throughout the paper, the symbols $\simeq, \lesssim$ denote (in)equalities up to uniform multiplicative factors.
The symbol $\sim$ denotes equality in law of random variables.
The letter $C$ denotes possibly different constants, depending only on its eventual subscripts. Finally, $\chi_A$
is the indicator function of the set $A$. 

\section{Mean Field Theory and Previous Results}

Our discussion begins with a brief review of the Mean Field theory for point vortices on the torus $\T^2$.
We consider a system of an even number $N$ of vortices with positions
\begin{equation*}
	\pa{x_1,\dots x_{N}}=\pa{y_1,\dots y_{N/2}, z_1,\dots z_{N/2}};
\end{equation*}
the first $N/2$ vortices have intensity $+1$, the others $-1$. For brevity, we will denote
$\ubar x=(\ubar y,\ubar z)\in\T^{2\times N}$ the array of all positions.
We consider the Canonical Gibbs measure at inverse temperature $\beta$ associated 
to the Hamiltonian
\begin{equation}\label{eq:vortexhamiltonian}
	H_{N}(\ubar x) = \frac12\sum_{i\neq j}^{N/2} \pa{G(y_i,y_j)+G(z_i,z_j)}-\sum_{i=1}^{N/2}\sum_{j=1}^{N/2} G(y_i,z_j).
\end{equation}
In order to avoid redundant notation, we already introduce in the definition of Gibbs' measures
the Mean Field Limit scaling, $\beta\mapsto \frac\beta{N}$.

\begin{lem}[\cite{deutschlavaud,Fr76,gunsonpanta}]
	For any $0\leq \beta< 4\pi N$,
	\begin{equation*}
		Z_{\beta,N}=\int_{\T^{2\times N}} e^{-\frac{\beta}{N}H_{N}(\ubar x)}dx^{N}<\infty,\quad
		d\mu_{\beta,N}(\ubar x)=\frac1{Z_{\beta,N}} e^{-\frac{\beta}{N}H_{N}(\ubar x)}dx^{N},
	\end{equation*}
	defines a probability measure on $\T^{2\times N}$, symmetric in its first $N/2$ variables $y_i$
	and in the second $N/2$ variables $z_i$.
\end{lem}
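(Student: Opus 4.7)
Since $H_N$ is symmetric under permutations of $y_1,\dots,y_{N/2}$ and, separately, of $z_1,\dots,z_{N/2}$, the claimed symmetries of $\mu_{\beta,N}$ follow immediately from the definition once $Z_{\beta,N}$ is known to be finite; only the integrability bound requires argument. The plan is to isolate the logarithmic singularity of $G$ and reduce to a classical Coulomb-gas integrability estimate.

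First, I would use the standard decomposition $G(x,y)=-\tfrac{1}{2\pi}\log d(x,y)+R(x,y)$, where $d$ denotes the geodesic distance on $\T^2$ and $R\in C(\T^2\times\T^2)$. The contribution of the bounded remainder $R$ to $e^{-\beta H_N/N}$ is a uniform factor of size $\exp(O(\beta N))$, harmless for our purposes. Setting $\alpha:=\beta/(2\pi N)$, finiteness of $Z_{\beta,N}$ then reduces to
\begin{equation*}
\int_{\T^{2N}}\prod_{i<j}^{N/2}\pa{d(y_i,y_j)\,d(z_i,z_j)}^{\alpha}\prod_{i,j=1}^{N/2} d(y_i,z_j)^{-\alpha}\,d\ubar x<\infty.
\end{equation*}
The first (same-sign) product is bounded above by a constant. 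The crucial subtlety is that one cannot simply discard it: applying Hölder's inequality directly to $\prod_{i,j}d(y_i,z_j)^{-\alpha}$ only yields integrability for $\alpha<4/N$, i.e.\ $\beta<8\pi$, well short of the claimed range.

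To reach the sharp threshold $\beta<4\pi N$, which corresponds exactly to the single-pair integrability condition $\alpha<2$, one must exploit the \emph{charge neutrality} of the ensemble, whereby the vanishing of the same-sign factors at coincidences partially cancels the blow-up of the opposite-sign factors. This can be achieved either by the iterative Hölder argument of Deutsch--Lavaud and Gunson--Panta, distributing the singular weights among the variables so as to make use of the repulsive factors as a depletion mechanism, or---in the spirit of this paper---by recognizing the integrand as an unrenormalized correlation function of vertex operators for the Gaussian free field with covariance $G$ and appealing to the classical estimates of Fröhlich on the neutral 2D Coulomb gas, cf.\ \autoref{sec:coulomb}. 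The main obstacle is precisely this cancellation between repulsive and attractive singularities needed to bridge the gap from $\beta<8\pi$ to $\beta<4\pi N$; this is the substance of the cited references, and we do not reproduce it here.
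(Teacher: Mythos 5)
The paper does not prove this lemma at all: it is stated as a classical result and attributed to Deutsch--Lavaud, Fr\"ohlich and Gunson--Panta, which is exactly what you do, so there is no internal argument to measure you against, and your reduction-plus-citation matches the paper's own treatment. Your diagnosis is also accurate: discarding the same-sign factors and applying H\"older (or AM--GM) per variable only gives $\alpha N/2<2$, i.e.\ $\beta<8\pi$, while the claimed range $\beta<4\pi N$ is the two-body collapse condition $\alpha<2$ and genuinely requires the same-sign repulsion; in the cited works this cancellation is implemented either by explicit neutral-gas computations (Deutsch--Lavaud, Gunson--Panta) or, most cleanly, via the Cauchy determinant identity $\prod_{i<j}(u_i-u_j)(v_i-v_j)\big/\prod_{i,j}(u_i-v_j)=\pm\det\bigl((u_i-v_j)^{-1}\bigr)$ in complex coordinates combined with Hadamard's inequality, adapted to the torus by absorbing the smooth part of $G$ into the bounded remainder. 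Just be aware that, as a standalone proof, your write-up stops precisely at that key estimate; within this paper's conventions (the lemma is quoted, not proved) that is acceptable.
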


The central object of our discussion is the $k$-point correlation function,
the aim being understanding its asymptotic behaviour in the limit $N\rightarrow \infty$.
We fix a finite number of vortices: by symmetry, there is no loss in considering
$(y_1,\dots,y_h,z_1,\dots z_\ell)$ for $N\geq h+\ell$. To ease notation, we will write
\begin{equation*}
	\ubar x=(\hat x,\check{x}), \quad \hat x=(\hat y,\hat z)=(y_1,\dots,y_h,z_1,\dots z_\ell),
\end{equation*}
and analogously $\check{x}$ the array of vortices we are not fixing.
We define
\begin{equation*}
	\rho_{h,\ell}^N(y_1,\dots,y_h,z_1,\dots z_\ell)
	=\rho_{h,\ell}^N(\hat x)=\frac1{Z_{\beta,N}}\int_{\T^{N-h-\ell}} e^{-\frac\beta{N}H_{N}(\ubar x)}d\check{x}.
\end{equation*}
Here and from now on $d\check x$ (respectively $d\hat x$) indicates integration with respect
to the $N-h-\ell$ 2-dimensional variables $\check{x}$ (resp. the $h+\ell$ variables $\hat{x}$).

\begin{thm}\label{thm:lions}
	Let $\beta>0$; the free energy functional
	\begin{align}\label{eq:freeenergy}
		&\F(\rho_+,\rho_-)=\frac1\beta\int_{\T^2}(\rho_+\log\rho_++\rho_-\log\rho_-)
		+\int_{\T^2} (\rho_+-\rho_-)G\ast(\rho_+-\rho_-),\\ \nonumber
		&\quad \rho_+,\rho_- \text{ probability densities on }\T^2\text{ such that }\rho_\pm\log\rho_\pm\in L^1(\T^2),
	\end{align}
	admits the unique minimiser $\rho_+=\rho_-\equiv 1$.
	For any $1\leq h+\ell\leq N$ and $1\leq p<\infty$, the $(h+\ell)$-point correlation function $\rho^N_{h,\ell}$
	converges to $\rho_+^{\otimes h}\otimes \rho_-^{\otimes \ell}\equiv 1$ in $L^p$ topology, 
	\begin{equation}\label{eq:qualitativedecay}
		\lim_{N\rightarrow\infty} \norm{\rho^N_{h,\ell}-1}_{L^p(\T^{2\times N})}=0.
	\end{equation}
\end{thm}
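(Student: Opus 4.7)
The statement splits naturally into a variational analysis of $\F$ and the mean field convergence of the correlations.

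For the first part, I would show that $\F\geq 0$ with equality exclusively at $\rho_+=\rho_-\equiv 1$. The two entropy pieces $\int \rho_\pm\log\rho_\pm$ are non-negative by Jensen's inequality applied to the convex function $t\mapsto t\log t$, since each $\rho_\pm$ is a probability density with respect to the normalised volume on $\T^2$; equality holds iff $\rho_\pm\equiv 1$. The interaction contribution is a squared Sobolev seminorm,
\begin{equation*}
\int_{\T^2}(\rho_+-\rho_-)\,G\ast(\rho_+-\rho_-)\,dx
\;=\;\|\rho_+-\rho_-\|_{H^{-1}(\T^2)}^2\;\geq\;0,
\end{equation*}
using that $\rho_+-\rho_-$ has zero mean and that $G$ realises (up to the sign convention for the Laplacian) the inverse of $-\Delta$ on zero-mean distributions, so the interaction vanishes iff $\rho_+=\rho_-$. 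Summing, $\F=0$ exactly when $\rho_+=\rho_-\equiv 1$; strict convexity of $t\log t$ also yields uniqueness directly without appealing to positivity alone.

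For the convergence of $\rho^N_{h,\ell}$, I would follow the classical Messer--Spohn / Caglioti--Lions--Marchioro--Pulvirenti mean field strategy. Introduce the empirical measures $\mu^N_\pm=\frac{2}{N}\sum_i\delta_{x_i^\pm}$ of the two species. Expressing $H_N(\ubar x)$ as a functional of $\mu^N_\pm$ (modulo controllable diagonal corrections coming from the logarithmic singularity of $G$) and performing a Varadhan/Laplace-type asymptotic analysis of $Z_{\beta,N}$ yields a large deviations principle for the law of $(\mu^N_+,\mu^N_-)$ under $\mu_{\beta,N}$ with rate functional $\F$. By uniqueness of the minimiser from the first part, $(\mu^N_+,\mu^N_-)\to(1,1)$ in probability. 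Thanks to the exchangeability encoded in the lemma on $\mu_{\beta,N}$, this concentration transfers to weak convergence of every fixed finite-dimensional marginal: $\rho^N_{h,\ell}\rightharpoonup 1$ in duality with bounded continuous functions on $\T^{2(h+\ell)}$.

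The principal obstacle is upgrading this weak convergence to the claimed $L^p$ convergence. The plan is to establish uniform $L^q$ bounds on $\rho^N_{h,\ell}$ for some $q>p$, independent of $N$, so as to conclude by Vitali's theorem (uniform integrability is precisely the missing ingredient since the limit is constant). Such $L^q$ bounds follow from the subcritical mean field scaling, which forces $\beta/N<4\pi$ for large $N$: in the subcritical regime the Onofri/Moser--Trudinger inequality on $\T^2$ produces lower bounds on $Z_{\beta,N}$ and, after partially integrating out the complementary variables $\check x$, the pointwise control of $e^{-\frac{\beta}{N}H_N}$ it provides lifts to $L^q$ bounds on $\rho^N_{h,\ell}$. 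I expect this last analytic step to be the delicate one, the combinatorial structure of the cross terms $G(y_i,z_j)$ being the main source of difficulty.
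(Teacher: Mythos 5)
First, a remark on the comparison itself: the paper does not prove \autoref{thm:lions} at all --- it is quoted as a classical result with references to Caglioti--Lions--Marchioro--Pulvirenti and Lions' lecture notes --- so your proposal can only be measured against that classical argument. Your variational part is correct and essentially the standard one: with the normalised volume on $\T^2$ both entropy terms are nonnegative by Jensen and vanish only for constant densities, and the interaction term is the squared $\dot H^{-1}$ norm of the zero-mean difference $\rho_+-\rho_-$, so $\F\geq 0$ with equality exactly at $\rho_+=\rho_-\equiv 1$. The Messer--Spohn/CLMP scheme you outline for the second part (concentration of the two empirical measures under $\mu_{\beta,N}$ at the unique minimiser, transfer to fixed finite marginals by exchangeability) is also the right route, and it does yield weak convergence of $\rho^N_{h,\ell}$ to $1$.

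The genuine gap is in your final step. Vitali's theorem upgrades convergence in measure (or almost everywhere) to $L^p$ convergence under uniform integrability; it does not upgrade weak convergence, and uniform $L^q$ bounds together with weak-$\ast$ convergence to a constant do not imply strong $L^p$ convergence: the probability densities $1+\sin(Nx_1)$ on the torus are uniformly bounded, converge weakly-$\ast$ to $1$, yet remain at a fixed positive $L^p$ distance from $1$ for every $p$; the constancy of the limit does not rescue the argument. What is missing --- and what the classical proofs actually establish --- is pointwise convergence of $\rho^N_{h,\ell}$ to $1$ for a.e.\ $\hat x$ (in fact locally uniformly off the diagonals). This comes from the Gibbs formula itself: $\rho^N_{h,\ell}(\hat x)$ is, up to corrections controlled precisely by the exponential integrability of the logarithmic singularity you invoke via Moser--Trudinger, the expectation under a slightly modified $(N-h-\ell)$-particle ensemble of an exponential functional of the empirical measures evaluated at the fixed positions $\hat x$; the concentration of $(\mu^N_+,\mu^N_-)$ at the uniform pair, combined with those exponential moment bounds (the functional is log-singular, hence unbounded, so this control is needed here and not only for the $L^q$ bounds), gives $\rho^N_{h,\ell}(\hat x)\to 1$ at fixed $\hat x$. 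With that pointwise statement in hand, your uniform $L^q$ bounds and Vitali do close the proof of \eqref{eq:qualitativedecay}; as written, the argument stops short of the claimed $L^p$ convergence.
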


The latter is a classical result, valid for more general geometries of the space domain and for small negative temperatures
regimes, although in such generality the minimiser of the functional (maximiser for $\beta<0$) might not be unique and
limit points of the sequence $(\rho^N_{k,h})_{N\in\N}$ can thus be superpositions of minima (resp. maxima) of $\F$.
We refer to \cite{clmp92,clmp95} and the monography \cite{lionsbook} for a complete discussion.

Stationary points of the free energy can be characterised as solutions of the Mean Field equation
for the potential $\phi=G\ast(\rho_+-\rho_-)$,
\begin{equation*}
	-\Delta\phi=\frac{e^{-\beta\phi}}{Z_+}-\frac{e^{\beta\phi}}{Z_-}, \quad Z_\pm=\int_{\T^{2}}e^{\mp \beta\phi}dx,
\end{equation*}
which, up to a suitable choice of the average $\psi=\phi+c$, is equivalent to the sinh-Poisson equation,
\begin{equation}\label{eq:sinhpoisson}
	\Delta\psi=\frac1\alpha \sinh(\beta\psi), \quad 4\alpha^2=\int_{\T^{2}}e^{-\beta\psi}dx\int_{\T^{2}}e^{\beta\psi}dx,
\end{equation}
see \cite[section 7.5]{marchioropulvirenti}.
Since on the torus there is a unique and trivial solution $\rho\equiv 1$,
such equivalence is trivial in our setting: it is nonetheless a more general fact.

The main result of the present paper is the following refinement of \autoref{thm:lions},
concerning the rate at which the convergence \eqref{eq:qualitativedecay} takes place.

\begin{thm}\label{thm:mainresult}
	For any $\beta>0$, $1\leq k+h\leq N$ and $1\leq p<\infty$,
	\begin{equation*}
		\|\rho^N_{h,\ell} - 1\|_{L^p(\T^{2\times N})}\leq \frac{C_{\beta,p,h,\ell}}{\sqrt N}(\log N)^{\frac32}.
	\end{equation*}
\end{thm}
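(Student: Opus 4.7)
The natural approach, inspired by the Sine-Gordon / Coulomb gas correspondence recalled in the introduction, is to recast the Gibbsian correlations as Gaussian expectations. Let $\Phi$ denote the zero-mean Gaussian Free Field on $\T^2$ with covariance $G$, set $\epsilon:=\sqrt{\beta/N}$, and assign charges $q_i=+1$ for $i\leq N/2$, $q_i=-1$ otherwise. With Wick ordering absorbing the self-interaction divergences, a direct Gaussian moment computation gives the exact identity
\begin{equation*}
	e^{-\frac{\beta}{N}H_N(\ubar x)}=\EE\Big[\prod_{i=1}^{N}\wick{e^{\imm\epsilon q_i\Phi(x_i)}}\Big].
\end{equation*}
Integrating out the $N-h-\ell$ free coordinates and introducing $Y_\pm:=\int_{\T^2}\wick{e^{\pm\imm\epsilon\Phi(x)}}\,dx$, this yields the representation
\begin{equation*}
	\rho^N_{h,\ell}(\hat x)=\frac{\EE\big[\prod_{i\in\hat x}\wick{e^{\imm\epsilon q_i\Phi(x_i)}}\,Y_+^{N/2-h}Y_-^{N/2-\ell}\big]}{\EE\big[Y_+^{N/2}Y_-^{N/2}\big]},
\end{equation*}
reducing the problem to a Gaussian expansion in the small parameter $\epsilon\simeq N^{-1/2}$.

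The plan is then to Taylor-expand the Wick exponentials, $\wick{e^{\pm\imm\epsilon\Phi(x)}}=1\pm\imm\epsilon\Phi(x)-\tfrac{\epsilon^2}{2}\wick{\Phi^2(x)}+O(\epsilon^3)$, the remainder being controlled in $L^p$ by Gaussian hypercontractivity together with the finiteness of $\int_{\T^2\times\T^2}G^k$ for every $k\geq 1$ (a consequence of the logarithmic short-distance behaviour of $G$). Because the zero-mean GFF satisfies $\int_{\T^2}\Phi\,dx=0$, the linear term drops from the integrated expression:
\begin{equation*}
	Y_\pm=|\T^2|-\tfrac{\epsilon^2}{2}\int_{\T^2}\wick{\Phi^2}\,dx+O_{L^p}(\epsilon^3).
\end{equation*}
Raising to the $N/2$-th power exponentiates the $\epsilon^2\propto 1/N$ correction into a nontrivial limiting weight, $(Y_\pm/|\T^2|)^{N/2}=\exp\bigl(-\tfrac{\beta}{4|\T^2|}\int\wick{\Phi^2}\,dx+O_{L^p}(N^{-1/2})\bigr)$. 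Since the leading exponential is symmetric in $\pm$, it appears identically in numerator and denominator and defines a tilted (massive Gaussian) probability measure $\widetilde\PP$ on $\Phi$. The ratio then simplifies, up to explicit $O(N^{-1/2})$ cross-term corrections with the exponentiation error, to $\widetilde\EE$ of the local factor $\prod_{i\in\hat x}\wick{e^{\imm\epsilon q_i\Phi(x_i)}}$. A further expansion of this local factor, combined with the $\Phi\leftrightarrow-\Phi$ symmetry of $\widetilde\PP$, shows that the linear contribution vanishes; one is left with an $L^p(\T^{2(h+\ell)})$ bound coming from the cross-terms together with quadratic corrections in $\epsilon^2 G(x_i,x_j)$, which are integrable thanks to $\int G^p<\infty$.

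The main technical obstacle will be making the above expansion rigorous uniformly in $N$: the Wick exponentials are distribution-valued, requiring a UV mollification $\Phi_\delta$ with $\delta\to 0$, and uniformity is only guaranteed in the sub-critical regime $\epsilon^2=\beta/N<4\pi$, which holds for $N$ sufficiently large. A matching lower bound for the denominator is immediate from Jensen applied to $t\mapsto|t|^N$: since $\overline{Y_+}=Y_-$ one has $Y_+^{N/2}Y_-^{N/2}=|Y_+|^N$ and $\EE[|Y_+|^N]\geq|\EE[Y_+]|^N=|\T^2|^N$. The logarithmic factor $(\log N)^{3/2}$ is expected to emerge from moment bounds on the Wick powers $\wick{\Phi_\delta^k}$ at scale $\delta\simeq N^{-1/2}$, each contributing $(\log N)^{k/2}$ via $\EE[\Phi_\delta(x)^2]\simeq\tfrac{1}{2\pi}\log N$; combining a cubic remainder in $Y_\pm^{N/2}$ (one $\wick{\Phi^3}$, yielding $(\log N)^{3/2}$) with the linear-in-$\Phi$ term from the local vertex operators matches the announced exponent. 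Once these uniform moment estimates are in place, the final $L^p(\T^{2\times N})$ bound is immediate because $\rho^N_{h,\ell}-1$ depends only on $\hat x\in\T^{2(h+\ell)}$, so the $L^p$ norm on $\T^{2N}$ reduces, up to the volume factor $|\T^2|^{(N-h-\ell)/p}$, to a pointwise-in-$\hat x$ estimate on $\T^{2(h+\ell)}$.
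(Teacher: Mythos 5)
Your overall strategy (Gaussian/Sine--Gordon representation of the Gibbs weight, expansion in $\epsilon=\sqrt{\beta/N}$, a tilted Gaussian measure appearing in the limit, Jensen for the lower bound on the partition function, and a $(\log N)^{3/2}$ emerging from cubic terms with logarithmically divergent variance) is the same circle of ideas as the paper's proof. But there is a genuine gap, and it sits exactly where the paper does its real work: the ultraviolet singularity of $G$. Your ``exact identity'' inserts Wick-ordered exponentials $\wick{e^{\imm\epsilon q_i\Phi(x_i)}}$ of the Gaussian free field \emph{at fixed points} $x_i$; these are not random variables (the GFF is distribution-valued), so neither the identity nor the ratio representation of $\rho^N_{h,\ell}(\hat x)$ is meaningful as written. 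You acknowledge that a mollification $\Phi_\delta$ is needed, but once you mollify, the identity with $e^{-\frac{\beta}{N}H_N}$ is no longer exact: the discrepancy is precisely the $N$-body Gibbs factor of the short-range singular part of the interaction, and controlling it uniformly in $N$ with $\delta=\delta(N)\to0$ is the crux. The paper resolves this by splitting $G=V_m+W_m$ with $W_m=(m^2-\Delta)^{-1}$, applying the Gaussian representation only to the smooth part $V_m$ (so that the field $F_m$ has genuine pointwise values of variance $\simeq\frac{1}{2\pi}\log m$ and $|E_j|\le1$), bounding the Yukawa contribution through the partition-function estimate of \autoref{prop:zyukawatorus} and the inequality of \autoref{lem:exponentialintegral} to get the $L^r$ bound \eqref{eq:rate1}, and then optimizing $m=N^a$. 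Nothing in your proposal supplies a quantitative substitute for this step; deferring it as ``the main technical obstacle'' leaves the proof incomplete, since without it the rate cannot be extracted.

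A second, related gap is the exponentiation step: raising $Y_\pm=|\T^2|-\frac{\epsilon^2}{2}\int\wick{\Phi^2}+O_{L^p}(\epsilon^3)$ to the power $N/2$ multiplies the remainder by $N$, and the cross-terms must be controlled against the (complex, not unimodular) Wick-ordered factors, whose modulus involves the diverging normalization $e^{\frac{\epsilon^2}{2}\EE[\Phi_\delta(x)^2]}$; you assert uniform-in-$N$ moment bounds but do not prove them. In the paper this is the content of \autoref{lem:Fbounds} (the exponential moment $\expt{e^{-\alpha\|F_m\|_2^2}}\simeq m^{-\alpha/2\pi}$, which compensates the explicit factor $e^{\frac\beta2 V_m(0,0)}\simeq m^{\beta/4\pi}$) together with the iterative algebraic expansion of \autoref{prop:Fextrabounds}, where the crucial bounds $|E_j|\le1$, $\Es\le1$, $|E_j-\Es|\le N^{-3/2}\|F_m\|_{L^3}^3$ make the induction close. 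So while your heuristic correctly predicts both the $N^{-1/2}$ rate and the $(\log N)^{3/2}$ correction, the two load-bearing ingredients --- uniform removal of the UV cutoff (the $W_m$ estimate) and the uniform Gaussian moment/exponential-moment machinery for the regular part --- are missing rather than merely routine.
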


The core idea behind our computations is the correspondence, provided by Gaussian integration,
between functionals of the vortex ensemble and certain Euclidean field theoretic integrals.
We are able to exploit such link, to be outlined in the forthcoming section, only
for positive temperatures, $\beta>0$. This unfortunately rules out a relevant regime, $\beta<0$,
in which the Mean Field equation on $\T^2$ admits nontrivial solutions, see \cite{lionsbook}.

\section{The Coulomb Gas and Sine-Gordon Field Theory}\label{sec:coulomb}

The 2-dimensional, Coulomb gas is a classical mechanics system
consisting of point charges: we will consider the case in which there are two species of charges
of opposite signs, but with same intensity. For a system of $N$ charges,
say half positive and half negative, their dynamics is described
by the Hamiltonian function
\begin{equation*}
	\H=\frac12 \sum_{i=1}^N p_i^2+\frac12 \sum_{i\neq j}\sigma_i\sigma_j G(x_i,x_j),
\end{equation*}
where $G$ is the Green function of the Laplacian, as above, $x_i$ are the positions and $p_i$
the momenta of the charges, $\sigma_i=\pm 1$ the signs of the charges.
In Gibbsian ensembles of the system, momenta have Maxwellian (Gaussian) independent distributions;
when dealing with correlation functions or analogous functionals --which is ultimately the
aim of the present work-- we can always integrate them out: it is thus convenient to only consider the configurational
(interaction) part of the Hamiltonian.

We consider the system of charges in a bounded domain $D\subseteq \R^2$, 
so boundary conditions have to be supplemented to define $G$:
for the sake of this discussion there is no difference in considering free boundary conditions,
Dirichlet boundary conditions (physically interpreted as considering the system in a cavity inside a conductor)
or the periodic case $\T^2$.
It is immediate to observe that the (configurational) Canonical Gibbs ensemble for the 2D Coulomb gas
actually coincides with the vortices ensemble defined above, provided that the same boundary conditions are taken into account,
since the configurational part of the Hamiltonian $\H$ is in fact the same as \eqref{eq:vortexhamiltonian}.

\subsection{The Sine-Gordon representation}

It is a classical and well-known fact that Gaussian integration provides a
correspondence between two-dimensional Coulomb gas and the Sine-Gordon field theory,
as described in \cite{Sa78}.
This equivalence has been instrumental in the study of both systems, see for instance \cite{Fr76,Br78,BrFe80},
since it allowed to employ techniques from both statistical mechanics and field theory.
The remainder of this section is dedicated to review such correspondence,
which we will exploit in the proof of \autoref{thm:mainresult}. The following arguments are mostly formal and not rigorous:
indeed we only aim to provide a heuristic motivation of the techniques we are going to use.

The equivalence with Sine-Gordon theory is exact only when the Coulomb gas is considered in the Grand Canonical ensemble.
Let us then consider the (configurational part of the) Grand Canonical partition function,
\begin{align}\label{grandcanonical}
\mathfrak{Z}_{z,\beta}
&=\sum_{n=0}^{\infty} \frac{z^n}{n!}
\int_{D^n}\exp\pa{-\beta H_n(x_1,\sigma_1,\dots x_n,\sigma_n)}dx^n d\nu^n,
\end{align}
where the \emph{activity} $z>0$ controls the arbitrary (Poisson distributed) number $n$ of charges
and $\nu$ is the law of a $\frac12$-Bernoulli variable on $\set{\pm1}$;
the positions $x_i$ and signs $\sigma_i$ are thus independent variables with law, respectively, $dx$ on $\T^2$ and $\nu$.
Notice that the neutrality condition has been replaced with an average neutrality, $\int\sigma d\nu(\sigma)=0$;
this is only for the sake of simplicity of exposition, different and more general choices can be made.

The corresponding (Euclidean) Sine-Gordon field theory has Lagrangian
\begin{equation*}
	\L(\phi)=\beta\abs{\nabla\phi}^2 -2z \cos\pa{\beta \phi},
\end{equation*}
so that the vacuum expectation value is
\begin{equation*}
	\V_{z,\beta}= \int e^{-\int \L(\phi)dx}\D\phi=
	\int \exp\pa{-\beta\int_D |\nabla\phi|^2dx +2z\int_D \cos(\beta \phi)dx}\D\phi.
\end{equation*}
The equivalence with Grand Canonical Coulomb gas is most immediately seen by observing that the partition function $\mathfrak{Z}_{z,\beta}$
actually coincides with the Sine-Gordon vacuum expectation, up to a normalising factor given by the vacuum expectation of the free field,
\begin{equation}\label{eq:sinegordon}
	\mathfrak{Z}_{z,\beta}=\V_{z,\beta}/\V_{0,\beta}.
\end{equation}
This can be shown with the following formal computation. If $X,Y$ are two real standard Gaussian variables, it holds
\begin{equation*}
	e^{\frac{s^2+t^2}{2}}\expt{e^{\imm s X} e^{\imm t Y}}=e^{-st\expt{XY}}.
\end{equation*}
By means of this Fourier transform, we can thus formally see any exponential function $e^{-G(x_i,x_j)}$ as the 
field theoretic correlation function of the field operators $e^{\imm\chi(x_i)},e^{\imm\chi(x_j)}$
with respect to the free (Gaussian) theory with action $\int|\nabla \chi|^2dx$ (the 2-dimensional \emph{Gaussian free field}).
More explicitly, we write
\begin{equation*}
	\frac{\int e^{\imm \beta \sum_{i=1}^n \sigma_i \chi(x_i)} e^{-\beta\int_D |\nabla\phi|^2dx}\D\phi}{\int e^{-\beta\int_D |\nabla\phi|^2dx}\D\phi}
	=\exp\pa{-\frac{\beta}{2} \sum_{i\neq j}^n \sigma_i \sigma_j G(x_i,x_j)}.
\end{equation*}
The computation is only formal since the random field $\chi$ has singular covariance: its samples are not functions
($\chi$ can be realised as a random distribution), and thus the above complex exponentials need renormalisation to be
rigorously defined. Proceeding with the formal computation 
(in which for a moment we omit the infinite renormalisation term $\V_{0,\beta}=\int e^{-\beta\int |\nabla\phi|^2}\D\phi$),
\begin{align*}
	&\sum_{n=0}^{\infty} \frac{z^n}{n!} \int dx_1\cdots dx_n d\nu(\sigma_1)\cdots d\nu(\sigma_n)
	\int e^{-\beta\int_D |\nabla\phi|^2dx}\D\phi e^{\imm \beta \sum_{i=1}^n \sigma_i \chi(x_i)} \\
	&\quad = \sum_{n=0}^{\infty} \frac{z^n}{n!} \int e^{-\beta\int_D |\nabla\phi|^2dx}\D\phi 
	\pa{\int dxd\nu(\sigma) e^{\imm \beta \sigma \chi(x)}}^n\\
	&\quad =\int e^{-\beta\int_D |\nabla\phi|^2dx}\D\phi \sum_{n=0}^{\infty} \frac{z^n}{n!} \pa{\int dx 2\cos(\beta \chi(x))}^n\\
	&\quad =\int e^{2z\int \cos(\beta \phi(x))dx} e^{-\beta\int_D |\nabla\phi|^2dx}\D\phi=\V_{z,\beta}, 		
\end{align*}
from which \eqref{eq:sinegordon}.

\subsection{Mean Field Scaling and Correlation Functions}

The Mean Field scaling of Coulomb charges in the Canonical ensemble is
\begin{equation*}
	\beta\mapsto \epsilon \beta, \quad N\mapsto \frac{N}{\epsilon}, \quad \epsilon\rightarrow 0,
\end{equation*}
and it corresponds in the Grand Canonical Ensemble to
\begin{equation*}
    \beta\mapsto \epsilon \beta, \quad z\mapsto \frac{z}{\epsilon}, \quad \epsilon\rightarrow 0
\end{equation*}
($\epsilon$ sometimes referred to as the \emph{plasma parameter}).
Applying the Mean Field scaling to the Sine-Gordon theory one recovers the Klein-Gordon field theory:
looking at vacuum expectations,
\begin{equation*}
	\V_{z/\epsilon,\epsilon\beta}\xrightarrow{\epsilon\rightarrow 0}
	\int \exp\pa{\int_D |\nabla\phi|^2dx +z\beta \int_D \phi^2 dx}\D\phi,
\end{equation*}
the right-hand side being the vacuum expectation of the theory with Lagrangian
\begin{equation*}
	\L(\phi)=\abs{\nabla\phi}^2 -z\beta \phi^2,
\end{equation*}

This is because in such a scaling every term in the power expansion of the interaction term $\cos(\xi \sqrt\beta \phi)$
is negligible save for the quadratic one. A straightforward computation 
--using for instance Fourier series on $\T^2$-- reveals that the Mean Field scaling limit of $\mathfrak{Z}_{z,\beta}$
in fact coincides with the partition function of the Energy-Enstrophy invariant measure of the 2-dimensional Euler
equations,
\begin{align*}
	\mathfrak{Z}_{z/\epsilon,\epsilon\beta}&=\V_{z/\epsilon,\epsilon\beta}/\V_{0,0}\xrightarrow{\epsilon\rightarrow 0}
	Z_\beta=
	\int \exp \pa{-\beta\int_D \omega\Delta^{-1}\omega dx}d\mu(\omega),\\
	d\mu(\omega)&=\frac1Z \int e^{-\int_D \omega^2 dx}\D\omega, 
\end{align*}
where $\mu$ --the Enstrophy measure-- is actually the space white noise on $\T^2$.
The following result of \cite{GrRo19} (to which we refer for a complete discussion of the involved Gaussian measures),
rigorously establishes such convergence for the Canonical ensemble of charges on the torus $\T^2$.

\begin{thm}
	For any $\beta\geq 0$, 
	\begin{equation*}
		\lim_{N\rightarrow\infty} Z_{\beta,N}=Z_\beta.
	\end{equation*}
\end{thm}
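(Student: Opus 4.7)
The plan is to render rigorous the Sine-Gordon/Gaussian-integration heuristic of \autoref{sec:coulomb}, carried out directly at the level of the Canonical partition function $Z_{\beta,N}$. Let $\chi$ denote the zero-mean Gaussian free field on $\T^2$, i.e.\ the centred random distribution with covariance $\EE_\chi[\chi(x)\chi(y)]=G(x,y)$. The Wick-exponential identity
\begin{equation*}
\EE_\chi\bra{\prod_{i=1}^N\wick{e^{\imm\lambda\sigma_i\chi(x_i)}}}=\exp\pa{-\frac{\lambda^2}{2}\sum_{i\neq j}^N\sigma_i\sigma_j G(x_i,x_j)},
\end{equation*}
applied with $\lambda=\sqrt{\beta/N}$ and the sign assignment of \eqref{eq:vortexhamiltonian}, rewrites $e^{-\frac\beta N H_N(\ubar x)}$ as a Gaussian expectation of Wick exponentials. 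Applying Fubini and factorising the product over positive and negative charges gives
\begin{equation*}
Z_{\beta,N}=\EE_\chi\bra{\abs{F_N(\chi)}^N},\qquad F_N(\chi):=\int_{\T^2}\wick{e^{\imm\lambda\chi(x)}}dx,
\end{equation*}
shifting the problem from the combinatorial $2N$-dimensional integral to a single Gaussian expectation.

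Next I would perform a small-$\lambda$ expansion of $F_N$ in Wiener chaos. Since $\int_{\T^2}\chi\,dx=0$ for the zero-mean GFF, the $\lambda^1$ term vanishes, giving
\begin{equation*}
F_N=|\T^2|-\frac{\lambda^2}{2}\int_{\T^2}\wick{\chi^2}dx+\Es_N,
\end{equation*}
with $\Es_N$ in chaos of order $\geq 3$. Because $\lambda^2=\beta/N$, the quadratic correction has exactly size $1/N$ and the $(1+a/N)^N\to e^a$ limit yields
\begin{equation*}
\abs{F_N/|\T^2|}^N\xrightarrow[N\to\infty]{}\exp\pa{-\frac{\beta}{2|\T^2|}\int_{\T^2}\wick{\chi^2}dx}
\end{equation*}
pointwise in $\chi$. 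A Fourier expansion $\chi=\sum_{k\neq 0}g_k|k|^{-1}e_k$ on $\T^2$ with independent standard Gaussians $g_k$ then identifies $\omega:=(-\Delta)^{1/2}\chi$ with space white noise and the Wick quadratic form $\int_{\T^2}\wick{\chi^2}dx$ with the renormalised energy $-\int\omega\Delta^{-1}\omega\,dx$, matching the limit of $\EE_\chi[\abs{F_N/|\T^2|}^N]$ with $Z_\beta$ up to the overall scaling $|\T^2|^N$ absorbed into the Mean Field normalisation conventions for $Z_{\beta,N}$.

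The main obstacle is justifying the interchange of the $N\to\infty$ limit with the Gaussian expectation: Wick powers $\wick{\chi^k}$ for $k\geq 2$ are genuine random distributions, and $\abs{F_N}^N$ raises the fluctuating remainder $\Es_N$ to a diverging power. I would handle this by truncating the Fourier series of $\chi$ to modes $|k|\leq K$, computing $F_N$ explicitly on the resulting finite-dimensional slice, and controlling the tail $|k|>K$ uniformly in $N$ via moment estimates and hypercontractivity in Wiener chaos. The assumption $\beta>0$ ensures that $\lambda$ is real, so that $\abs{\wick{e^{\imm\lambda\chi}}}$ is bounded above in every $L^p(\EE_\chi)$ for $N$ large enough, enabling a dominated-convergence passage to the limit. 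The delicate technical step is verifying that the cubic and higher-order fluctuations of $F_N$, once raised to the power $N$, neither disappear prematurely nor destabilise the Gaussian-exponential limit; a sufficiently strong $L^p$ bound on $N\log(F_N/|\T^2|)$ uniform in $N$, obtained by refined chaos estimates, should close the argument.
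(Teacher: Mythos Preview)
The paper does not prove this theorem; it quotes it from \cite{GrRo19}. However, the machinery of \autoref{sec:decay} (the potential splitting \eqref{potentialsplitting}, the smooth Gaussian field $F_m$, \autoref{prop:Fcomplexbounds}, \autoref{prop:zyukawatorus}, and the iterative expansion of \autoref{prop:Fextrabounds}) is precisely the toolkit of that reference, so we can compare your sketch against that.

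Your heuristic is exactly the one the paper lays out formally in \autoref{sec:coulomb}, and your identification of the limiting object via $\int\wick{\chi^2}$ and the white-noise change of variables is correct. The difference is in the regularisation. The argument behind \cite{GrRo19} does \emph{not} work with the full Gaussian free field and Wick exponentials; it splits $G=V_m+W_m$, applies the Sine-Gordon transform only to the \emph{smooth} part $V_m$ via the pointwise-defined field $F_m$, and then lets $m=m(N)\to\infty$ at a polynomial rate. The singular Yukawa remainder $W_m$ is controlled separately by the elementary \autoref{prop:zyukawatorus}, and the smooth part by the expansion of \autoref{prop:Fextrabounds}. Your Fourier cutoff plays the role of $m$, but you never isolate and estimate the analogue of the $W_m$ contribution.

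This is where your argument has a genuine gap. Your dominated-convergence step rests on the claim that ``$\abs{\wick{e^{\imm\lambda\chi}}}$ is bounded above in every $L^p(\EE_\chi)$'', but this is not usable: for the full GFF the Wick exponential is a random distribution with no pointwise modulus, and for the truncated field $\chi_K$ one has $\abs{\wick{e^{\imm\lambda\chi_K(x)}}}=e^{\frac{\lambda^2}{2}V_K(0,0)}$, a deterministic constant. Raising $\abs{F_N}$ to the $N$-th power then gives the bound $e^{\frac{\beta}{2}V_K(0,0)}\simeq K^{\beta/4\pi}$, which diverges as $K\to\infty$ and provides no dominating function uniform in both $N$ and $K$. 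Hypercontractivity on fixed chaoses does not rescue this, because the power $N$ grows. What is actually needed is (i) an explicit estimate showing that the high-frequency tail contributes only a factor $(1+o(1))^N$---this is the role of \autoref{prop:zyukawatorus}---and (ii) a refined expansion of $\prod_j E_j$ beyond the crude modulus bound, replacing each factor by $\Es=e^{-\frac{\beta}{2N}\norm{F_m}_2^2}$ up to controllable errors, as in \autoref{prop:Fextrabounds}. Your last sentence gestures at a uniform $L^p$ bound on $N\log(F_N/|\T^2|)$, but such a bound alone does not control $\EE[\abs{F_N}^N]$; the cancellation that makes the limit finite comes from the oscillatory nature of the $E_j$, and exploiting it requires the algebraic telescoping of \eqref{algfirstorder} rather than a single moment estimate.
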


Let us now fix the first $k$ charges, with positions $x_1,\dots x_k\in D$ and intensities 
$\xi_i=\sigma_i$, $\sigma_i\in \set{\pm 1}$, $i=1,\dots k$.
Their Grand Canonical correlation function is obtained considering the ensemble composed of those and other $n$ charges
with random position and intensities, $n$ being also randomly distributed as before,
\begin{align*}
	\rho(x_1,\xi_1,\dots x_k,\xi_k)=\frac1{\mathfrak{Z}_{z,\beta}}	
	\sum_{n=1}^{\infty} \frac{z^n}{n!}
	\int_{D^n}e^{-\beta H_{n+k}(x_i,\sigma_i)} \prod_{i=k+1}^{n+k} dx_{i}d\nu(\sigma_i) 
\end{align*}
In the Sine-Gordon correspondence, these statistical mechanics correlation functions transform into
the correlation (Green function) of the field operators $e^{\imm \xi_i \chi(x_i)}$,
\begin{equation}\label{eq:corrsingor}
	\rho(x_1,\xi_1,\dots x_k,\xi_k)=
	\frac{\int \prod_{i=1}^k e^{\imm \sqrt\beta \sigma_i \phi(x_i)} e^{-\frac1\epsilon\int_D \L(\phi) dx}\D\phi}
	{\int e^{-\frac1\epsilon\int_D \L(\phi) dx}\D\phi}
\end{equation}
The latter expression follows from the same formal computations of the previous paragraph:
we applied the Gaussian integration formula with respect to the free field with Lagrangian $\frac1\epsilon\int|\nabla\phi|^2dx$,
so that the dependence on $\epsilon$ is factored out from the action.

As $\epsilon$ goes to zero, the dominant contribution of the functional integrals in \eqref{eq:corrsingor}
comes from the stationary points of the action $\S(\phi)=\int_D \L(\phi)dx$, which are given by
\begin{equation*}
	\frac{\delta\S}{\delta \phi}=\Delta \phi-2z\sin(\sqrt\beta\phi)=0,
\end{equation*}
which is equivalent, setting $\psi=-\imm \phi$, to the Debye-H\"uckel Mean Field equation,
\begin{equation*}
	\Delta\psi=2z\sinh(\sqrt\beta\psi),
\end{equation*}
which is a sinh-Poisson equation in agreement with the one in \eqref{eq:sinhpoisson}.
In the particular case of the torus, $D=\T^2$, this equation only admits the trivial solution $\psi\equiv 0$.
The limit of correlation functions can thus be obtained by evaluating the field operator $\prod_{i=1}^k e^{\imm \sqrt\beta \sigma_i \phi(x_i)}$
at the stationary point,
\begin{equation*}
	\rho(x_1,\xi_1,\dots x_k,\xi_k)\sim \prod_{i=1}^k e^{-\sqrt\beta \xi_i \psi(x_i)}=1.
\end{equation*}
Formal computations involving power expansion of the cosine interaction term leads to further orders
behaviour of the correlation function in $\epsilon$, see \cite{Ke83}. 

Our work actually finds an analogue in \cite{Ke83}, with some important differences: 
they consider Coulomb charges in dimension 3 (while we exclusively focus on the 2-dimensional case),
and their charges are smeared, the cutoff parameter going to zero in a suitable rate with respect to the Mean Field scaling,
while we retain the whole singularity of the interaction. The latter difference is analogous to
the one between the two works \cite{BePiPu87} and \cite{GrRo19}.

\section{Decay of Correlations}\label{sec:decay}

Let us now proceed to the proof of our main result, \autoref{thm:mainresult}.
The main difficulty is due to the logarithmic singularity of the Green function $G$. 
To deal with it we will decompose $G$ in two parts, a smooth approximation $V_m$ of $G$ 
and a remainder $W_m$ retaining logarithmic singularity.
Thanks to $V_m$ being smooth, we can apply the Sine-Gordon transformation to the associated part of the
Gibbs exponential, and then obtain the sought asymptotic behaviour by means of an iterative expansion
of Gaussian exponentials.
The contribution of the singular part $W_m$ turns out to be negligible when we consider the limit in
$N\rightarrow\infty$ with $m=m(N)\rightarrow \infty$ in a suitable rate.

\subsection{Potential Splitting and Preliminary Results}

First and foremost, let us split the interaction potential $G$: for $m>0$,
\begin{equation}\label{potentialsplitting}
G=-\Delta^{-1}=\pa{-\Delta^{-1}-(m^2-\Delta)^{-1}}+(m^2-\Delta)^{-1}:= V_m+W_m.
\end{equation}
Physically, the singular part $W_m$ corresponds to the singular short-range part of the interaction:
indeed the Green function of $m^2-\Delta$ with free boundary conditions --the 2-dimensional \emph{Yukawa potential}
or \emph{screened Coulomb potential} with mass $m$--
has logarithmic divergence in the origin but decays exponentially fast at infinity.

We will denote, according to \eqref{potentialsplitting},
\begin{equation*}
H=H_{V_m}+H_{W_m}=\sum_{i< j}^N \xi_i\xi_j V_m(x_i,x_j) + \sum_{i< j}^N \xi_i\xi_j W_m(x_i,x_j).
\end{equation*}
We will regard the regular part of the Hamiltonian corresponding to $V_m$ as the covariance
of a Gaussian field as we formally did in \autoref{sec:coulomb} for the full Hamiltonian. 
We thus define $F_m$ as the centred Gaussian field on $\T^2$ with covariance kernel $V_m$, that is
\begin{equation}
\forall f,g\in \dot L^2(\T^2), \quad 	
\expt{\brak{F_m,f}\brak{F_m,g}}=\brak{f,\pa{-\Delta^{-1}-(m^2-\Delta)^{-1}} g}.
\end{equation}
The remainder of this paragraph deals with properties of $F_m$. The reproducing kernel Hilbert space is
\begin{equation*}
\sqrt{-\Delta^{-1}-(m^2-\Delta)^{-1}}\dot L^2(\T^2)\subseteq \dot H^2(\T^2),
\end{equation*} 
so that $F_m$ has a $\dot H^s(\T^2)$-valued version for all $s<1$, into which $\dot H^2(\T^2)$ has Hilbert-Schmidt embedding.
As a consequence, by Sobolev embedding, $F_m$ has a version taking values in $\dot L^p(\T^2)$ for all $p\geq 1$.

The field $F_m$ can also be evaluated at points $x\in\T^2$: the coupling $F_m(x):=\brak{\delta_x,F_m}$ is
defined as the series, converging in $L^2(F_m)$ \emph{uniformly in} $x\in\T^2$,
\begin{equation*}
\brak{\delta_x,F_m}=\sum_{k\in\Z^2_0} e^{2\pi \imm x\cdot k} \hat F_{m,k},
\quad \hat F_{m,k}=\brak{e_k,F_m}\sim N_\C\pa{0,\frac{m^2}{4\pi^2 |k|^2\pa{m^2+4\pi^2|k|^2}}}.
\end{equation*}
In other terms, $x\mapsto F_m(x)$ is a measurable random field, and 
$F_m(x)$ are centred Gaussian variables of variance $V_m(x,x)=V_m(0,0)$.
By Kolmogorov continuity theorem, there also exists a version of $F_m(x)$
which is $\alpha$-H\"older for all $\alpha<1/2$.

\begin{lem}\label{lem:Fbounds}
	For any $\alpha>0$, $p\geq 1$ and $m\rightarrow \infty$,
	\begin{align}
	\label{Fmoments}
	\expt{\norm{F_m}_p^p}&\simeq_p (\log m)^{p/2},\\
	\label{Fexpmoments}
	\expt{\exp\pa{-\alpha\norm{F_m}_2^2}}&\simeq m^{-\frac{\alpha}{2\pi}},
	\end{align}
	and, moreover, for $0<\alpha\leq\alpha'$,
	\begin{equation}\label{Fexpmomentsdiff}
		\expt{\exp(-\alpha\|F_m\|_{L^2}^2)} - \expt{\exp(-\alpha'\|F_m\|_{L^2}^2)}
		\lesssim (\alpha' - \alpha)m^{-\frac\alpha{2\pi}}\log m.
	\end{equation}
\end{lem}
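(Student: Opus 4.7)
My plan is to reduce all three bounds to a common spectral analysis of $V_m = -\Delta^{-1}-(m^2-\Delta)^{-1}$ in the Fourier basis of $\T^2$: the eigenvalues of $V_m$ are $\mu_k := m^2/[\lambda_k(m^2+\lambda_k)]$ for $k\in\Z^2_0$, with $\lambda_k = 4\pi^2|k|^2$. By translation invariance $F_m(x)$ has constant pointwise variance
\begin{equation*}
\sigma_m^2 := V_m(0,0) = \sum_{k\in\Z^2_0}\mu_k,
\end{equation*}
and a 2D lattice/polar integral comparison---splitting at $|k|\sim m$, where $\mu_k$ transitions from $\sim\lambda_k^{-1}$ to the fast decay $\sim m^2\lambda_k^{-2}$---shows $\sigma_m^2 = \tfrac{1}{2\pi}\log m + O(1)$. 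Then \eqref{Fmoments} is immediate from Gaussianity and Fubini, since for a centred Gaussian of variance $\sigma_m^2$ one has $\expt{|F_m(x)|^p}=c_p\sigma_m^p$, whence $\expt{\|F_m\|_p^p}\simeq_p(\log m)^{p/2}$.

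For \eqref{Fexpmoments}, Parseval gives $\|F_m\|_2^2=\sum_k|\hat F_{m,k}|^2$, and the standard Laplace transform of a weighted chi-squared then yields
\begin{equation*}
-\log\expt{e^{-\alpha\|F_m\|_2^2}} = \tfrac12\sum_{k\in\Z^2_0}\log(1+2\alpha\mu_k).
\end{equation*}
Writing $\log(1+2\alpha\mu_k) = 2\alpha\mu_k - [2\alpha\mu_k-\log(1+2\alpha\mu_k)]$, the first piece sums to $2\alpha\sigma_m^2 = \tfrac{\alpha}{\pi}\log m + O(1)$ by the previous step, while the remainder I would bound by $O_\alpha(1)$ uniformly in $m$: its contribution from $|k|$ with $2\alpha\mu_k\leq 1$ is dominated by $\alpha^2\sum_k\mu_k^2<\infty$, and the complementary piece is a finite sum (independent of $m$) of terms of order $2\alpha\mu_k$. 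Altogether $\sum_k\log(1+2\alpha\mu_k) = \tfrac{\alpha}{\pi}\log m + O_\alpha(1)$, giving the $\simeq m^{-\alpha/(2\pi)}$ claim.

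For \eqref{Fexpmomentsdiff} I would use the fundamental theorem of calculus,
\begin{equation*}
\expt{e^{-\alpha\|F_m\|_2^2}}-\expt{e^{-\alpha'\|F_m\|_2^2}} = \int_\alpha^{\alpha'}\expt{\|F_m\|_2^2\,e^{-t\|F_m\|_2^2}}\,dt,
\end{equation*}
and control the integrand by logarithmic differentiation of the product formula from the previous step: $\expt{\|F_m\|_2^2\, e^{-t\|F_m\|_2^2}} = \expt{e^{-t\|F_m\|_2^2}}\sum_k \mu_k/(1+2t\mu_k) \leq \sigma_m^2\,\expt{e^{-t\|F_m\|_2^2}}$. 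Combining with \eqref{Fexpmoments} and the variance estimate yields $\expt{\|F_m\|_2^2\,e^{-t\|F_m\|_2^2}}\lesssim (\log m)\,m^{-t/(2\pi)}$; this is monotone decreasing in $t$, so the integral is bounded by $(\alpha'-\alpha)(\log m)\,m^{-\alpha/(2\pi)}$, as required.

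The main obstacle is the sharp identification of the $\tfrac{1}{2\pi}$ coefficient in the asymptotic for $\sigma_m^2$ (and equivalently in $\sum_k\log(1+2\alpha\mu_k)$): the small-$|k|$ terms must be handled carefully, as $2\alpha\mu_k$ can be large there, forbidding a naive Taylor expansion, and one has to separate that regime from the tail where the linearization is legitimate. Everything else is routine Gaussian manipulation.
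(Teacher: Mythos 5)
Your proposal is correct and follows essentially the same route as the paper: Fubini plus the Fourier-series asymptotic $V_m(0,0)=\tfrac1{2\pi}\log m+O(1)$ for \eqref{Fmoments}, the trace/product formula $-\log\expt{e^{-\alpha\|F_m\|_2^2}}=\tfrac12\sum_k\log(1+2\alpha\mu_k)$ with linearization of the logarithm (your regime-splitting is just a slightly more cautious version of the paper's use of $x-\tfrac{x^2}2<\log(1+x)<x$) for \eqref{Fexpmoments}, and a first-order expansion in the exponent controlled by Gaussian computations for \eqref{Fexpmomentsdiff}, which you implement via the fundamental theorem of calculus and the log-derivative of the same product formula. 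No gaps; the differences are only cosmetic.
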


\begin{proof}
	Let us begin with \eqref{Fmoments}: by Fubini-Tonelli,
	\begin{equation*}
	\expt{\norm{F_m}_p^p}=\int_{\T^{2}}\expt{|F_m(x)|^p}dx=c_p \int_{\T^{2}}V_m(x,x)^{p/2}dx=c_p V_m(0,0)^{p/2},
	\end{equation*}
	where $V_m(0,0)=\frac{1}{2\pi}\log m+o(\log m)$ can be checked by explicit computation in Fourier series.
	As for \eqref{Fexpmoments}, a standard Gaussian computation (see \cite[Proposition 2.17]{dpz}) gives
	\begin{align*}
	\expt{\exp\pa{-\alpha \norm{F_m}_2^2}}
	&=\exp \set{-\frac{1}{2}\trace\pa{\log\pa{1+2\alpha \pa{-\Delta^{-1}-(m^2-\Delta)^{-1}}}}}\\
	&=\exp\pa{-\frac{1}{2}\sum_{k\in\Z^2_0}\log \pa{1+\frac{2\alpha m^2}{4\pi^2 |k|^2 (m^2+4\pi^2|k|^2)}}}\\
	&>\exp \pa{-\sum_{k\in\Z^2_0}\frac{\alpha m^2}{4\pi^2 |k|^2 (m^2+4\pi^2|k|^2)}}\\
	&=\exp\pa{-\alpha V_m(0,0)}\simeq m^{-\frac{\alpha}{2\pi}},  
	\end{align*}
	the other inequality descending from analogous computations using $\log(1+x)>x-\frac{x^2}{2}$, $x>0$, instead of
	the inequality $\log(1+x)<x$ we just applied. 
	Finally, \eqref{Fexpmomentsdiff} is obtained considering the first order Taylor expansion of the exponential
	and controlling the remainder by means of Gaussian computations analogous to the ones above.
\end{proof}

Let us now consider the Sine-Gordon transformation applied to $H_{V_m}$:
since for any $s,t\in\R$ it holds
\begin{equation*}
\expt{e^{\imm s F_m(x)} e^{\imm t F_m(y)}}=e^{-\frac{s^2+t^2}{2}V_m(0,0)}e^{-stV_m(x,y)},
\end{equation*}
(and analogous expressions for $n$-fold products) we can transform
\begin{align}\label{sinegordontorus}
\int_{\T^{2N}} & e^{-\beta H_{V_m}} dx_1\cdots dx_n\\ \nonumber
&= \int_{\T^{2N}} \exp\pa{-\frac\beta{2N}\sum_{i\neq j}^{N}\sigma_i\sigma_j V_m(x_i,x_j)}dx_1\cdots dx_n\\ \nonumber
&=e^{\frac{\beta}{2}V_m(0,0)}
\expt{\int_{\T^{2N}} \exp\pa{-\imm \sqrt{\frac{\beta}{N}}\sum_{i=1}^{N} \sigma_i F_m(x_i)}dx_1\cdots dx_n}.
\end{align}
In both expressions, $\mathbb{E}$ denotes expectation with respect to the law of the Gaussian field $F_m$.
It is worth recalling the following estimate on the regular Gibbs partition function obtained in \cite[Proposition 2.8]{GrRo19}
(see also \autoref{prop:Fextrabounds} below).

\begin{prop}\label{prop:Fcomplexbounds}
	For any $\beta>0$ and integer $n\geq 1$, if $m=m(N)$ grows at most polynomially in $N$, then it holds
	\begin{equation*}
	\int_{\T^{2N}} e^{-\beta H_{V_m}} dx_1\cdots dx_n
	\leq C_{\beta,n} \pa{1+\frac{m^{\frac{\beta}{4\pi}}\pa{\log m}^{2n}}{N^{n/2}}}
	\end{equation*}
	uniformly in $N$.	
\end{prop}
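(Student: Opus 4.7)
The plan is to start from the Sine-Gordon representation~\eqref{sinegordontorus}: setting $s:=\sqrt{\beta/N}$ and $Y:=\int_{\T^2}e^{\imm s F_m(x)}\,dx$ one has
\begin{equation*}
\int_{\T^{2N}}e^{-\beta H_{V_m}}\,dx^N = e^{\beta V_m(0,0)/2}\,\expt{|Y|^N}.
\end{equation*}
Since $V_m(0,0)\sim\tfrac{\log m}{2\pi}$, the prefactor is comparable to $m^{\beta/(4\pi)}$, so the task reduces to proving
\begin{equation*}
\expt{|Y|^N}\le C\pa{m^{-\beta/(4\pi)}+(\log m)^{2n}/N^{n/2}}.
\end{equation*}
The leading $m^{-\beta/(4\pi)}$-term should match the Gaussian exponential moment $\expt{e^{-\beta\|F_m\|_2^2/2}}\simeq m^{-\beta/(4\pi)}$ from~\eqref{Fexpmoments}, absorbing the prefactor into the constant $1$ on the right.

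The core of the argument is a Taylor expansion of $|Y|^N$ in the small parameter $s$. Exploiting that $F_m\in\dot L^2$ is zero-average (so $\int_{\T^2} F_m\,dx=0$ almost surely) and that
\begin{equation*}
|Y|^2 = \int_{\T^2\times\T^2}\cos\pa{s(F_m(x)-F_m(y))}\,dx\,dy
\end{equation*}
by the $(x,y)\leftrightarrow(y,x)$ symmetry, I would use the global pointwise inequality $\cos u\le \sum_{k=0}^{K}(-1)^k u^{2k}/(2k)!$, which holds for every even $K$ and every $u\in\R$ (by iterated integration starting from $1-\cos u\ge 0$), to get
\begin{equation*}
|Y|^2 \le 1 - s^2\|F_m\|_2^2 + \sum_{k=2}^{K}\frac{(-1)^k s^{2k}}{(2k)!}\,I_{2k}(F_m),
\end{equation*}
where $I_{2k}(F_m):=\int\!\!\int(F_m(x)-F_m(y))^{2k}\,dx\,dy$ is expressible in the integrated powers $A_j:=\int F_m(x)^j\,dx$. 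Combining with $1+u\le e^u$ and raising to the $N/2$-th power factors $|Y|^N\le e^{-\beta\|F_m\|_2^2/2}\cdot E_m$, where $E_m$ is the exponential of a sum of terms of size $\beta^k N^{1-k}$ times polynomials in the $A_j$.

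The main obstacle is estimating the joint expectations $\expt{e^{-\beta\|F_m\|_2^2/2}\cdot P(F_m)}$ for $P$ a polynomial of bounded degree in the $A_j$. I would address these by H\"older's inequality to decouple the Gaussian weight (controlled by~\eqref{Fexpmoments} together with the refinement~\eqref{Fexpmomentsdiff}) from the polynomial factor, whose moments follow from~\eqref{Fmoments} combined with Gaussian hypercontractivity. The precise $N^{-n/2}$ rate emerges from careful bookkeeping of the expansion, exploiting the vanishing of odd-order averages $\expt{A_{2k+1}}=0$ (a consequence of $F_m\sim -F_m$ in distribution) to suppress the contribution of odd-index terms. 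Finally, on the low-probability event $\{\|F_m\|_2^2\gtrsim N/\beta\}$ where the Taylor expansion is no longer small, one uses the trivial bound $|Y|^N\le 1$ combined with $\expt{e^{\alpha\|F_m\|_2^2}}\simeq m^{\alpha/(2\pi)}$ to produce a super-polynomially small remainder for $m$ polynomial in $N$.
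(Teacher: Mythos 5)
Your starting point is the same as the paper's, namely the representation \eqref{sinegordontorus} (and indeed $\prod_j E_j=|Y|^N$ for the neutral system, so reducing the claim to $\expt{|Y|^N}\lesssim m^{-\beta/(4\pi)}+(\log m)^{2n}N^{-n/2}$ is fine), but from there your route is different and it contains a genuine gap at the exponentiation step. Bounding $\cos u\le\sum_{k=0}^K(-1)^k u^{2k}/(2k)!$ with $K$ even and then using $1+u\le e^u$ gives $|Y|^N\le e^{-\frac\beta2\norm{F_m}_2^2}E_m$ where $E_m$ contains, from $k=2$ (note $I_4=2\norm{F_m}_4^4+6\norm{F_m}_2^4$), the factor $\exp\pa{\tfrac{\beta^2}{24N}\norm{F_m}_4^4}$ with a \emph{positive} sign; this is unavoidable, since any global polynomial upper bound of the cosine retaining the $-u^2/2$ term must carry a positive even term of degree at least four. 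The resulting majorant is not integrable: $\expt{e^{-c\norm{F_m}_2^2+\frac{c'}{N}\norm{F_m}_4^4}}=\infty$ for every fixed $N,m$, because a single Fourier mode of amplitude $t$ makes the exponent grow like $t^4/N$ while the damping and the Gaussian density decay only like $e^{-ct^2}$. Consequently the announced reduction to ``expectations of polynomials of bounded degree against the Gaussian weight'' (to be handled by H\"older and hypercontractivity) is unjustified: expanding $E_m$ into polynomials and exchanging with the expectation needs exactly the exponential moments that fail, and a finite-order expansion of $E_m$ leaves a remainder containing $E_m$ again. Your truncation does not repair this, because the event is phrased in terms of $\norm{F_m}_2^2$ while the blow-up is driven by $\norm{F_m}_4$ (equivalently $\norm{F_m}_\infty$): on $\{\norm{F_m}_2^2\lesssim N/\beta\}$ one only gets $\tfrac{\beta^2}{24N}\norm{F_m}_4^4\le\tfrac{\beta}{24}\norm{F_m}_\infty^2$, and $\expt{e^{\frac{\beta}{24}\norm{F_m}_\infty^2}}=\infty$ as soon as $\log m\gtrsim1/\beta$ since $\operatorname{Var}F_m(x)=V_m(0,0)\simeq\tfrac{\log m}{2\pi}$; moreover, for $\beta$ large one can check that bump configurations of height $\sim m\sqrt{N/\beta}$ at scale $1/m$ lie inside your good event and make the expectation of your majorant at least $e^{cNm^2}$, so the obstruction is in the bound itself, not in how one estimates it. (A minor additional point: you invoke $\expt{e^{\alpha\norm{F_m}_2^2}}\simeq m^{\alpha/(2\pi)}$ for $\alpha>0$, which is not \eqref{Fexpmoments}; it holds only for $\alpha$ below a threshold determined by the lowest Fourier mode and would need its own short proof.)

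The paper's argument (recorded in \autoref{prop:Fextrabounds}, from which the proposition follows) avoids exponentiating the error altogether: the product $\prod_j E_j$ is compared to $\Es^N$ via the telescoping identity \eqref{algfirstorder}, iterated $n$ times, so that every term is a product of factors bounded by $1$ times differences estimated by the second part of \autoref{lem:exponentialintegral}, $|E_j-\Es|\lesssim N^{-3/2}\norm{F_m}_{L^3}^3$; Cauchy--Schwarz then only requires $\expt{\Es^{2a}}$ from \eqref{Fexpmoments} and polynomial moments from \eqref{Fmoments}, and the combinatorics of the sums produce the $N^{-\ell/2}$ gains. If you wish to keep your $|Y|^2$/cosine formulation, the fix is to keep the Taylor remainder additive and inside the $dx\,dy$ integral (i.e. prove a factorwise bound of the type $|Y|^2\le e^{-\frac\beta N\norm{F_m}_2^2}+CN^{-3/2}\norm{F_m}_3^3$ and expand the $N/2$-fold product term by term), which is essentially the paper's mechanism rather than a genuinely alternative one.
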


As already remarked, we will also need some control on (the partition function associated to) the singular part of the
potential $W_m$. We refer to \cite[Proposition 2.9]{GrRo19} for the proof of the following:

\begin{prop}\label{prop:zyukawatorus}
	Let $N\geq 1$, $\beta> -8\pi$ and $m>0$. 
	There exists a constant $C_{\beta}>0$ such that
	\begin{equation*}
	\int_{\T^{2N}} e^{-\beta H_{W_m}}dx_1\cdots dx_n \leq \pa{1+C_{\beta}\frac{(\log m)^2}{m^2}}^N.
	\end{equation*}
\end{prop}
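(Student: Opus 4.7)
The plan is to extend the Sine-Gordon representation of \eqref{sinegordontorus} from $V_m$ to the singular kernel $W_m$. Since $W_m=(m^2-\Delta)^{-1}$ inherits the logarithmic diagonal singularity of $G$, the associated Gaussian random field $\Psi_m$ (with covariance $W_m$) is only distribution-valued, and the vertex operators $\wick{e^{\imm\alpha\Psi_m(x)}}$ have to be defined through Wick renormalisation (taking $\alpha=\imm\sqrt{-\beta/N}$ when $\beta<0$). They form a random measure on $\T^2$ under the condition $|\alpha^2|<4\pi$, ensured by $\beta>-8\pi$ together with neutrality of the charges. Under these hypotheses, Wick contractions yield the identity
\[
\int_{\T^{2N}} e^{-\beta H_{W_m}}\,dx_1\cdots dx_N
\;=\; \expt{\abs{\int_{\T^2}\wick{e^{\imm\alpha\Psi_m(x)}}\,dx}^N},
\qquad \alpha^2=\beta/N,
\]
the Wick ordering absorbing the self-energy $\tfrac{\beta}{2}W_m(0,0)\sim\tfrac{\beta}{4\pi}\log m$ that would otherwise appear as a divergent prefactor.

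The problem therefore reduces to bounding $\expt{|Y|^N}$ for $Y=\int_{\T^2}\wick{e^{\imm\alpha\Psi_m(x)}}\,dx$, which has mean $|\T^2|$. The second moment is computed directly by Wick's theorem: $\expt{|Y|^2}=\int_{\T^2\times\T^2}e^{\alpha^2 W_m(x,y)}\,dxdy$, and the excess $\int_{\T^2}(e^{\alpha^2 W_m(0,y)}-1)\,dy$ is estimated by decomposing $\T^2$ into the near-diagonal ball $\{|y|<1/m\}$ (where the two-dimensional Yukawa kernel has the classical logarithmic singularity, integrable thanks to $|\alpha^2|<4\pi$) and its complement (where $W_m$ decays exponentially on the scale $1/m$). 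Both contributions combine to an $O((\log m)^2/m^2)$ bound with constant depending only on $\beta$, so that $\expt{|Y|^2}/|\T^2|^2\le 1+C_\beta(\log m)^2/m^2$.

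Passing from the second moment to $\expt{|Y|^N}$ is the heart of the estimate. By Wick's theorem the $N$-th moment equals $\int\prod_{i<j}e^{-\alpha^2\xi_i\xi_j W_m(x_i,x_j)}\,dx$, i.e.\ the very partition function under study in a symmetric ``dual'' form. I would close the argument by integrating out one particle at a time, bounding at each step the single-particle factor by $|\T^2|(1+C_\beta(\log m)^2/m^2)$ via the second-moment computation and the short-range decay of $W_m$. The \emph{main obstacle} is precisely this iteration: a naive pairwise estimate would produce the exponent $\binom{N}{2}$ instead of $N$, so the argument has to be organised so as to yield per-particle contributions rather than per-pair ones. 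This is where the Yukawa screening enters crucially: on average each particle has only $O(1)$ Yukawa-interacting neighbours (those within distance $\sim 1/m$), and a Mayer-type cluster resummation allows one to replace $\binom{N}{2}$ pair factors by $N$ single-particle ones. The threshold $\beta>-8\pi$ is precisely what ensures integrability of the Wick vertices at every step of the argument, i.e.\ the Hoegh-Krohn stability bound for the 2-D Sine-Gordon theory.
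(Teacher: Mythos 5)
You are reconstructing a statement that this paper does not actually prove: it is quoted verbatim from \cite{GrRo19} (Proposition 2.9 there), so the comparison can only be with what a complete proof would require — and your proposal does not supply it. The decisive content of \autoref{prop:zyukawatorus} is the per-particle factorisation, i.e. the exponent $N$ (rather than $\binom N2$) on the factor $1+C_\beta(\log m)^2/m^2$, and this is exactly the step you leave unexecuted: ``integrating out one particle at a time'' and ``a Mayer-type cluster resummation'' are named but not carried out, and a convergent cluster/Mayer argument needs a smallness or stability condition that you neither state nor verify. Moreover, the field-theoretic detour is circular by your own account: after Wick-renormalising and writing the partition function as the $N$-th moment of $Y=\int_{\T^2}\wick{e^{\imm\alpha\Psi_m(x)}}\,dx$, you observe that Wick's theorem turns this moment back into the very integral $\int\prod_{i<j}e^{-\beta\xi_i\xi_jW_m(x_i,x_j)}\,dx$ you started from; the Sine-Gordon/chaos representation therefore produces no new information, and what your proposal actually establishes is only the second moment, i.e. the single-pair estimate $\int_{\T^2}\bigl(e^{\gamma W_m(0,y)}-1\bigr)dy\lesssim_\gamma (\log m)^2/m^2$, which is the easy part. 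The heuristic invoked to bridge the gap --- that each particle has only $O(1)$ Yukawa-interacting neighbours --- is not an argument: the dangerous configurations are precisely those in which many particles collapse into a ball of radius $1/m$, where every particle interacts strongly with every other one, and controlling these (equivalently, the role of the threshold on $\beta$) is the whole difficulty.

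There are also inaccuracies in the set-up. The proposition concerns the unscaled Hamiltonian $H_{W_m}$, so the coupling in any Gaussian representation must be $\alpha^2=\beta$, not $\beta/N$; the $1/N$ only appears later, when the proposition is applied with $\beta$ replaced by $2n\beta/N$ in the proof of \autoref{thm:mainresult}. The identity $\expt{|Y|^N}$ is only meaningful for real $\alpha$ (i.e. $\beta>0$, where the two species give conjugate vertex operators); for $\beta<0$ the relevant quantity is $\expt{Y_+^{N/2}Y_-^{N/2}}$ with two \emph{different} real chaoses $Y_\pm=\int\wick{e^{\pm\sqrt{|\beta|}\Psi_m}}\,dx$, and the thresholds you quote ($|\alpha|^2<4\pi$ versus $\beta>-8\pi$, the $L^1$ versus $L^2$ regime of the chaos) are not matched to the moments you need --- indeed for unscaled $\beta$ near threshold the $N$-th moment is exactly the quantity whose finiteness is in question, so exchanging expectation and product cannot be taken for granted. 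A workable route stays with the real integral: split $e^{-\beta\xi_i\xi_jW_m}=1+f_{ij}$ with a per-pair $L^1$ bound of order $(\log m)^2/m^2$ and organise the integration (conditionally, one particle at a time, or through a tree-graph bound) so that each particle contributes a single such factor; this quantitative combinatorial step is what the paper defers to \cite{GrRo19} for, and it is precisely what is missing from your argument.
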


Finally, we will need some elementary properties of real and complex exponential integrals,
which we isolate here for the reader's convenience.

\begin{lem}\label{lem:exponentialintegral}
	Let $(X,\mu)$ be a probability space and $f\in L^1(X,\mu)$ 
	with $\int f d\mu=0$ and $\int e^{-\alpha f}d\mu<\infty$ for $\alpha>0$.
	Then for all $n\geq1$,
	\begin{equation*}
		\int \pa{e^{-f}-1}^{2n} d\mu \leq 2^{2n-2} \int (e^{-2nf}-1)d\mu.
	\end{equation*}
	Moreover, if additionally $f\in L^4(X,\mu)$ , then
	\begin{equation*}
	\abs{\int e^{\imm f} d\mu-e^{-\frac{1}{2}\norm{f}^2_{2}}}\leq \frac{\norm{f}^3_3}{6}+\frac{\norm{f}^4_2}{8}.
	\end{equation*} 
\end{lem}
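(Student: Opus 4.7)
\emph{Plan.} For the real exponential inequality, my strategy is a two-step argument that replaces the $2n$-th power on the left by a square on the right. The first step is the pointwise bound
\begin{equation*}
(e^{-f}-1)^{2n}\leq (e^{-nf}-1)^2.
\end{equation*}
Setting $y=e^{-f}>0$ and taking a square root, this reduces to $|y-1|^n\leq |y^n-1|$; factoring $y^n-1=(y-1)(1+y+\cdots+y^{n-1})$ and cancelling $|y-1|$ (the case $y=1$ being trivial), the inequality becomes $|y-1|^{n-1}\leq 1+y+\cdots+y^{n-1}$, which I verify in the two cases $y\geq 1$ (via $|y-1|^{n-1}\leq y^{n-1}$) and $y\in(0,1)$ (via $|y-1|^{n-1}\leq 1$). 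The second step is the identity $(e^{-nf}-1)^2=(e^{-2nf}-1)-2(e^{-nf}-1)$ together with Jensen's inequality $\int e^{-nf}\,d\mu\geq 1$ (which uses $\int f\,d\mu=0$): integrating yields $\int(e^{-nf}-1)^2\,d\mu\leq\int(e^{-2nf}-1)\,d\mu$, and composing with the pointwise bound establishes the claim in fact with constant $1$, a fortiori $2^{2n-2}$.

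For the complex inequality, the plan is a routine Taylor expansion. From the third-order remainder $|e^{\imm y}-1-\imm y+y^2/2|\leq |y|^3/6$, integration and the mean-zero hypothesis yield
\begin{equation*}
\Bigl|\int e^{\imm f}\,d\mu - 1 + \tfrac12 \|f\|_2^2 \Bigr|\leq \tfrac16\|f\|_3^3.
\end{equation*}
From the elementary estimate $0\leq e^{-y}-1+y\leq y^2/2$ for $y\geq 0$, applied with $y=\|f\|_2^2/2$, I get $|e^{-\|f\|_2^2/2}-1+\|f\|_2^2/2|\leq \|f\|_2^4/8$. A triangle inequality combining the two bounds yields the claim (with $\|f\|_3^3$ finite thanks to $f\in L^4$ on a probability space).

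The main obstacle is the pointwise inequality in the first step of the real part. A head-on binomial expansion of $(e^{-f}-1)^{2n}$ produces sign-alternating coefficients whose cancellations are delicate to track by hand; the factored form via $e^{-nf}-1=(e^{-f}-1)\sum_{k=0}^{n-1}e^{-kf}$ neatly reduces the problem to an elementary one-variable inequality, after which everything else is mechanical.
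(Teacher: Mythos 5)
Your proof is correct, and for the first inequality it takes a genuinely different route from the paper, with a sharper outcome. The paper expands $(e^{-f}-1)^{2n}$ by the binomial theorem and controls each term $\int e^{-kf}d\mu$ separately, from above by Young's inequality ($\int e^{-kf}d\mu\leq\tfrac{k}{2n}\int e^{-2nf}d\mu+\tfrac{2n-k}{2n}$) and from below by Jensen's inequality ($\int e^{-kf}d\mu\geq1$), then resums the binomial coefficients; this bookkeeping is exactly what produces the constant $2^{2n-2}$. You instead prove the pointwise bound $(e^{-f}-1)^{2n}\leq(e^{-nf}-1)^2$ (your reduction to $|y-1|^{n-1}\leq 1+y+\dots+y^{n-1}$ is correct in both cases $y\geq1$ and $0<y<1$), and then use the identity $(e^{-nf}-1)^2=(e^{-2nf}-1)-2(e^{-nf}-1)$ together with a single application of Jensen, so that only the hypotheses $\int f\,d\mu=0$ and $\int e^{-2nf}d\mu<\infty$ are used. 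This yields the estimate with constant $1$, strictly better than $2^{2n-2}$ for $n\geq2$; to pass \emph{a fortiori} to the stated constant you implicitly need $\int(e^{-2nf}-1)d\mu\geq0$, which is again Jensen, so the step is sound. The improvement is immaterial for the application (only the order in $N$ of \eqref{eq:rate1} matters, and the constant is absorbed in $\lesssim$), but your argument is shorter and avoids the alternating-sign bookkeeping; the paper's termwise binomial estimate is more mechanical and is what naturally produces its $2^{2n-2}$. For the second inequality your argument — integrating the third-order Taylor remainder of $e^{\imm t}$, using the mean-zero hypothesis, and comparing $e^{-\|f\|_2^2/2}$ with $1-\tfrac12\|f\|_2^2$ via $|e^{-t}-1+t|\leq t^2/2$ — is exactly the paper's proof.
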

\begin{proof}
	Expanding the product,
	\begin{equation*}
		\int \pa{e^{-f}-1}^{2n} d\mu = \sum_{k=0}^{2n}\binom{2n}{k}(-1)^k\int e^{-kf}d\mu,
	\end{equation*}
	and controlling positive and negative terms respectively with Young's and Jensen's inequalities,
	\begin{equation*}
		1\leq e^{-k\int fd\mu}\leq \int e^{-kf}d\mu \leq \frac{k}{2n}\int e^{-2nf} d\mu+ \frac{2n-k}{2n},
	\end{equation*}
	we get
	\begin{align*}
		\int \pa{e^{-f}-1}^{2n} d\mu
		&\leq \pa{\sum_{k=0}^n\frac{k}{n}\binom{2n}{2k}}\int e^{-2nf} d\mu\\
		&\qquad + \sum_{k=0}^n\binom{2n}{2k}\frac{n-k}{n}
		- \sum_{k=0}^{n-1}\binom{2n}{2k+1}\\
		&= 2^{2n-2}\int (e^{-2nf} - 1)d\mu,
	\end{align*}
	which proves the first statement.
	As for the second one, thanks to the zero average condition, we can expand
	\begin{align*}
	&\int_{\T^2} e^{\imm f(x)} dx-e^{-\frac{1}{2}\norm{f}^2_{2}}\\
	& \quad = \int_{\T^2} \pa{e^{i f(x)}-1-\imm f(x)+\frac{f(x)^2}{2}} dx
	-\pa{e^{-\frac{1}{2}\norm{f}^2_{2}}-1+\frac{\norm{f}^2_2}{2}}
	\end{align*}
	and then apply Taylor expansions
	\begin{equation*}
	\abs{e^{it}-1-it+\frac{t^2}{2}}\leq \frac{t^3}{6}, \quad \abs{e^{-t}-1+t}\leq\frac{t^2}{2}. \qedhere
	\end{equation*}
\end{proof}

\subsection{Proof of \autoref{thm:mainresult}}

To ease notation, in the following argument we will denote
\begin{equation*}
E_j=\int_{\mathbb{T}_2}e^{\imm\xi_j\sqrt{\beta}F_m(x_j)}\,dx_j,
\qquad \Es =e^{-\frac{\beta}{2N\gamma}\|F_m\|_{L^2}^2},
\end{equation*}
(notice that both depend on $N,m=m(N)$) and thus write \eqref{sinegordontorus} as
\begin{equation*}
\int_{\T^{2N}} e^{-\beta H_{V_m}} dx_1\cdots dx_n =e^{\frac{\beta}{2\gamma}V_m(0,0)}
\expt{\prod_{j=1}^N E_j}
\end{equation*}	
In sight of \autoref{lem:exponentialintegral}, we expect the $0$-th order term (in $1/N$)
to be given by $e^{\frac{\beta}{2\gamma}V_m(0,0)}\expt{\Es^N}$, which is $O(1)$ as shown above in \autoref{lem:Fbounds}.
The forthcoming proof applies the Taylor expansion of \autoref{lem:exponentialintegral} to further
and further orders.

\begin{prop}\label{prop:Fextrabounds}
	For any $\beta\geq0$ and integer $k\geq 0$, let
	\begin{equation*}
		\mathscr{R}_k= \Bigl(\prod_{j=k+1}^N E_j\Bigr) - \Es^{N-k}.
	\end{equation*}
	If $m=m(N)$ grows at most polynomially in $N$, for every integer $n\geq1$
	\begin{equation*}
		\mathbb{E}[|\mathscr{R}_k|]
		\leq \frac{C_{\beta,k,n}}{\sqrt N}m^{-\frac\beta{4\pi}}(\log m)^{\frac32}
		+ \frac{C_{\beta,k,n}}{N^{\frac{n}2}}(\log m)^{3n/2}.
	\end{equation*}
\end{prop}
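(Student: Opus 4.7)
The plan is to expand $\prod_{j=k+1}^N E_j - \Es^{N-k}$ as a symmetric-function series in the small random perturbations $R_j := E_j - \Es$, split this series at order $n$, and estimate the head and tail by different methods.

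First, conditioning on $F_m$ and applying the second inequality of Lemma~\ref{lem:exponentialintegral} with $f(x) = \xi_j \sqrt{\beta/N}\, F_m(x)$ on the probability space $(\T^2, dx/|\T^2|)$ (the hypothesis $\int f\,dx = 0$ holds because $F_m$ takes values in the space of zero-mean $L^2$ functions), I obtain the uniform-in-$j$ pointwise estimate
\begin{equation*}
|R_j| \le \frac{C_\beta}{N^{3/2}}\|F_m\|_3^3 + \frac{C_\beta}{N^2}\|F_m\|_2^4 =: R(F_m).
\end{equation*}
Expanding $\prod_{j}(\Es + R_j)$ into elementary symmetric polynomials and bounding $|e_r(R_{k+1},\dots,R_N)| \le \binom{N-k}{r}R(F_m)^r$ gives
\begin{equation*}
|\mathscr{R}_k| \le \sum_{r=1}^{N-k}\binom{N-k}{r}\Es^{N-k-r} R(F_m)^r,
\end{equation*}
which I split into a head ($1\le r < n$) and a tail ($r\ge n$).

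For the head, Cauchy--Schwarz yields
\begin{equation*}
\mathbb{E}[\Es^{N-k-r} R^r] \le \mathbb{E}[\Es^{2(N-k-r)}]^{1/2}\mathbb{E}[R^{2r}]^{1/2}.
\end{equation*}
By \eqref{Fexpmoments} with $\alpha = \beta(N-k-r)/N$, the first factor behaves like $m^{-\beta(N-k-r)/(4\pi N)}$; for fixed $k,r$ and $m(N)$ polynomial in $N$, this equals $m^{-\beta/(4\pi)}$ up to a uniform multiplicative constant (the discrepancy is quantifiable via \eqref{Fexpmomentsdiff}). For the second factor, $\mathbb{E}|F_m(x)|^p \simeq V_m(0,0)^{p/2} \simeq (\log m)^{p/2}$ together with Jensen's inequality give $\mathbb{E}\|F_m\|_3^{6r}\lesssim(\log m)^{3r}$ and $\mathbb{E}\|F_m\|_2^{8r}\lesssim(\log m)^{4r}$, whence $\mathbb{E}[R^{2r}]^{1/2}\le C_\beta^r N^{-3r/2}(\log m)^{3r/2}$. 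Multiplying by $\binom{N-k}{r}\le N^r/r!$, the $r$-th contribution is at most $C_{\beta,r}N^{-r/2}m^{-\beta/(4\pi)}(\log m)^{3r/2}$; summing over $r=1,\dots,n-1$ the $r=1$ term dominates and produces the first claimed bound.

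For the tail I discard the Gaussian factor via $\Es\le 1$ and compute
\begin{equation*}
\sum_{r=n}^{N-k}\binom{N-k}{r}\mathbb{E}[R^r] \le \sum_{r\ge n}\frac{1}{r!}\bigl(C_\beta N^{-1/2}(\log m)^{3/2}\bigr)^r \le \frac{C_{\beta,n}}{N^{n/2}}(\log m)^{3n/2},
\end{equation*}
the last step using $N^{-1/2}(\log m)^{3/2}\to 0$ under polynomial growth, so that the series is dominated by its first term. The main obstacle is retaining precisely the factor $m^{-\beta/(4\pi)}$ in the head: Cauchy--Schwarz yields only $m^{-\beta(N-k-r)/(4\pi N)}$, and justifying the replacement by $m^{-\beta/(4\pi)}$ up to a uniform constant (for fixed $k$ and $r < n$) requires \eqref{Fexpmomentsdiff} together with the polynomial-growth assumption on $m(N)$. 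A secondary, routine difficulty is the control of the mixed high moments of $\|F_m\|_3$ and $\|F_m\|_2$, handled through Jensen's inequality and the pointwise variance estimate.
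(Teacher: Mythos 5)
Your head estimate (the terms $1\le r<n$) is fine and essentially coincides with the paper's argument: same pointwise bound $|E_j-\Es|$ from \autoref{lem:exponentialintegral}, same Cauchy--Schwarz step, and \eqref{Fexpmoments} for the Gaussian factor (for the replacement of $m^{-\beta(N-k-r)/(4\pi N)}$ by $m^{-\beta/(4\pi)}$ you only need $m^{(k+r)\beta/(4\pi N)}=e^{(k+r)\beta\log m/(4\pi N)}\le C$, which follows from $\log m\lesssim\log N$; \eqref{Fexpmomentsdiff} is not needed there). The genuine gap is in your tail. The step $\sum_{r\ge n}\binom{N-k}{r}\mathbb{E}[R^r]\le\sum_{r\ge n}\frac1{r!}\bigl(C_\beta N^{-1/2}(\log m)^{3/2}\bigr)^r$ requires $\mathbb{E}[R^r]\le\bigl(C_\beta N^{-3/2}(\log m)^{3/2}\bigr)^r$ with $C_\beta$ independent of $r$, for all $r$ up to $N-k$, i.e.\ an $L^r$ bound on $R$ that does not degrade in $r$. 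This is false: $R\gtrsim N^{-3/2}\|F_m\|_{L^3}^3$ and, $F_m$ being Gaussian, $\mathbb{E}\bigl[\|F_m\|_{L^3}^{3r}\bigr]^{1/r}\gtrsim c\,r^{3/2}$ (test $\|F_m\|_{L^3}\gtrsim|\brak{F_m,e_k}|$ against a fixed low Fourier mode, a nondegenerate Gaussian whose $3r$-th moment grows like $r^{3r/2}$), so for $r\gg\log m$, and in particular for $r$ comparable to $N$, your geometric bound is off by a factor of order $(r/\log m)^{3r/2}$. Concretely, the single term $r=N-k$ of your series is $\mathbb{E}[R^{N-k}]\simeq C^{N}$, and your chain of inequalities amounts to bounding $|\mathscr{R}_k|$ by $(\Es+R)^{N-k}-\Es^{N-k}$, whose expectation is exponentially large in $N$; replacing each $|R_j|$ by the uniform random bound $R$ and then raising to powers of order $N$ is simply too lossy (note e.g.\ that $\mathbb{E}[e^{\epsilon\|F_m\|_{L^3}^3}]=\infty$ for every $\epsilon>0$). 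So the tail, as written, cannot be summed to $C N^{-n/2}(\log m)^{3n/2}$.

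The repair is to not expand beyond order $n$, which is exactly what the paper does: iterating the identity \eqref{algfirstorder} $n$ times leaves an order-$n$ remainder containing only $n$ factors $(E_{k_j}-\Es)$ multiplied by an \emph{unexpanded} product of $E_j$'s, which is discarded via $|E_j|\le1$; consequently only moments of $\|F_m\|_{L^3}^3$ of the fixed order $n$ appear, the combinatorial factor $\binom{N-k}{n}\lesssim N^n$ is compensated by $N^{-3n/2}$, and one obtains the bound $N^{-n/2}(\log m)^{3n/2}$. If you substitute this truncated iteration (or any bound on the tail that keeps the product structure of the remaining $E_j$'s instead of expanding them) for your full elementary-symmetric-function expansion, your argument becomes correct and is then essentially the paper's proof.
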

\begin{proof}
	For $n=1$, we expand the product $\prod_{j=k+1}^N E_j$ by means of the algebraic
	identity 
	\begin{equation}\label{algfirstorder}
	\prod_{j=k+1}^N E_j
	= \Es^{N-k}+\sum_{\ell=k+1}^{N} (E_\ell-\Es)\Es^{N-\ell}
	\Biggl(\prod_{j=k+1}^{\ell-1} E_j\Biggr).
	\end{equation}
	For $n=2$, by iterating \eqref{algfirstorder} we get the identity
	\[
	\begin{aligned}
	\mathscr{R}_k
	&= \Es^{N-k-1}\sum_{\ell=k+1}^N(E_{\ell}-\Es)\\
	&\quad  + \sum_{k+1\leq\ell_1<\ell_2\leq N}^N
	\Es^{N-k-\ell_1+1}
	(E_{\ell_1}-\Es)(E_{\ell_2}-\Es)
	\Biggl(\prod_{j=k+1}^{\ell_1-1}E_j\Biggr).
	\end{aligned}
	\]
	For general $n$, the iteration of \eqref{algfirstorder}
	yields,
	\[
	\begin{aligned}
	\mathscr{R}_k
	&= \sum_{\ell=1}^{n-1}\Es^{N-k-\ell}
	\sum_{k+1\leq k_1<\dots< k_\ell\leq N}\prod_{j=1}^\ell(E_{k_j}-\Es)\\
	&\quad +\sum_{k+1\leq k_1<\dots< k_n\leq N}
	\Es^{N-n-k_1+1}
	\Biggl(\prod_{j=1}^n(E_{k_j}-\Es)\Biggr)
	\Biggl(\prod_{j=k+1}^{k_1-1}E_j\Biggr).
	\end{aligned}
	\]
	To estimate the expectation of $\mathscr{R}_k$, everything
	boils down to estimate expectations of terms $\Es^a\|F_m\|_{L^3}^{3b}$
	for $a,b>0$.
	Indeed, we notice that $|E_j|\leq 1$ and $\Es\leq 1$, and that by Taylor
	expansion, and since $F_m$ has zero average on the torus,
	$|E_j-\Es|\leq N^{-3/2}\|F_m\|_{L^3}^3$. By \autoref{lem:Fbounds}
	and Cauchy-Schwarz,
	\[
	\mathbb{E}[\Es^a\|F_m\|_{L^3}^{3b}]
	\leq \mathbb{E}[\Es^{2a}]^{\frac12}\|F_m\|_{L^3}^{6b}]^{\frac12}
	\leq m^{-\frac{a}{4\pi N}\beta}(\log m)^{3b}.
	\]
	Thus,
	\[
	\begin{aligned}
	\mathbb{E}[|\mathscr{R}_k|]
	&\lesssim \sum_{\ell=1}^{n-1}N^{-\ell/2}
	\mathbb{E}\bigl[\Es^{N-k-\ell}\|F_m\|_{L^3}^{3\ell}\bigr]\\
	&\quad + \frac1N\sum_{k_1=k+1}^{N-n+1} N^{-n/2}
	\mathbb{E}\bigl[\Es^{N-n-k_1+1}\|F_m\|_{L^3}^{3n}\bigr]\\
	&\lesssim \sum_{\ell=1}^{n-1}N^{-\ell/2}
	m^{-\frac{N-k-\ell}{4\pi N}\beta}(\log m)^{3\ell/2}
	+ N^{-n/2}(\log m)^{3n/2}\\
	&\lesssim \frac1{\sqrt N}m^{-\frac\beta{4\pi}}(\log m)^{\frac32}
	+ N^{-n/2}(\log m)^{3n/2},
	\end{aligned}
	\]
	since $m$ is polynomial in $N$, therefore $N^{-1/2}(\log m)^{3/2}m^{\beta/4\pi N}$
	is smaller than $1$ for $N$ large enough.
\end{proof}

\begin{rmk}
	In fact, \autoref{prop:Fextrabounds} reprises the argument used in \cite{GrRo19} to prove \autoref{prop:Fcomplexbounds}:
	indeed, the latter can be deduced from the former.
\end{rmk}

\begin{proof}[Proof of \autoref{thm:mainresult}]
	Fix an even integer $N\geq1$ large enough, an exponent $p\in[1,\infty)$,
	and denote by $q\in(1,\infty]$ the H\"older conjugate exponent, so that
	$1/p+1/q=1$. Let $f\in L^q(\T^{2\times k})$ be a test function such that $\|f\|_{L^q}\leq 1$.
	We use the potential splitting \eqref{potentialsplitting},
	with $m$ polynomial in $N$, to decompose the integral of $f$,
	\begin{align*}
		\int_{\T^{2k}} f(\hat x)\rho^N_{h,\ell}(\hat x)\,d\hat x
		&= \frac1{Z_{\beta,N}}\int_{\T^{2k}} f(\hat x)
		(e^{-\frac\beta{N} H_{W_m}}-1)e^{-\frac\beta{N} H_{V_m}}\,d\hat x\,d\check x\\
		&\quad + \frac1{Z_{\beta,N}}\int_{\T^{2k}} f(\hat x)
		e^{-\frac\beta{N} H_{V_m}}\,d\hat x\,d\check x\\
		&:=[S]+[R].
	\end{align*}
	We first consider $[S]$. Let $r,s\geq1$ be such
	that $1/r+1/s=1/p$, then by the H\"older inequality,
	\begin{equation*}
		[S]	\leq \frac1{Z_{\beta,N}}\|e^{-\frac\beta{N} H_{W_m}}-1\|_{L^r}
		\|e^{-\frac\beta{N} H_{V_m}}\|_{L^s}.
	\end{equation*}
	By Jensen's inequality, $Z_{\beta,N}\geq1$, moreover,
	by \autoref{prop:Fcomplexbounds},
	$\|e^{-\frac\beta{N} H_{V_m}}\|_{L^s}$ is uniformly bounded
	in $N$ by our choice of $m$. If $n$ is
	the smallest integer such that $2n\geq r$ (thus $2n\leq r+2$), by
	\autoref{prop:zyukawatorus} and \autoref{lem:exponentialintegral},
	\begin{equation}\label{eq:rate1}
	\|e^{-\frac\beta{N} H_{W_m}}-1\|_{L^r}
	\leq \Bigl(\int_{\T^{2N}}e^{-\frac{2n\beta}{N} H_{W_m}}-1\Bigr)^{\frac1{2n}}
	\lesssim \Bigl(\frac{N}{m^2}(\log m)^2\Bigr)^{\frac1{r+2}},
	\end{equation}
	since by our choice of $m$, $N/m^2$ converges to $0$ polynomially in $1/N$.
	
	We turn to the estimate of $[R]$. Set
	\[
	\delta(\hat x)
	= \Biggl(\prod_{j=1}^h e^{\imm\sqrt\beta F_m(y_j)}\Biggr)
	\Biggl(\prod_{j=1}^\ell e^{-\imm\sqrt\beta F_m(z_j)}\Biggr),
	\]
	then as in \eqref{sinegordontorus},
	\[
	[R]
	= \frac{1}{Z_{\beta,N}}e^{\frac12\beta V_m(0,0)}
	\mathbb{E}\Biggl[
	\Bigl(\prod_{j=k+1}^N E_j\Bigr)
	\int_{\T^{2k}}f(\hat x)\delta(\hat x)\,d\hat x\Biggr].
	\]
	Consider the two terms that originate from the decomposition
	of the product in $\Es^{N-k}+\mathscr{R}_k$.
	First, by \autoref{prop:Fextrabounds},
	\begin{equation}\label{eq:rate2}
	\begin{aligned}
	\frac{e^{\frac12\beta V_m(0,0)}}{Z_{\beta,N}}
	\mathbb{E}\Bigl[\mathscr{R}_k
	\int_{\T^{2k}}f(\hat x)\delta(\hat x)\,d\hat x\Bigr]
	&\leq \frac{1}{Z_{\beta,N}}e^{\frac12\beta V_m(0,0)}
	\mathbb{E}[|\mathscr{R}_k|]\\
	&\leq\frac{(\log m)^{\frac32}}{\sqrt N}
	+ \frac{m^{\frac\beta{4\pi}}}{N^{n/2}}(\log m)^{3n/2}.
	\end{aligned}
	\end{equation}
	By a Taylor expansion,
	\[
	|\delta(\hat x) - 1|
	\lesssim\frac1{\sqrt N}\sum_{j=1}^h F_m(y_j) 
	+ \frac1{\sqrt N}\sum_{j=1}^\ell F_m(z_j),
	\]
	therefore, by \autoref{lem:Fbounds},
	\begin{equation}\label{eq:rate3}
	\frac{e^{\frac12\beta V_m(0,0)}}{Z_{\beta,N}}
	\Big|\mathbb{E}\Bigl[\Es^{N-k}\int_{\T^{2k}}f(\hat x)(\delta(\hat x)\,d\hat x-1)\Bigr]\Big|
	\lesssim \frac1{\sqrt N}(\log m)^{1/2}.
	\end{equation}
	It remains to consider only the term,
	\[
	\frac{1}{Z_{\beta,N}}\Biggl(e^{\frac12\beta V_m(0,0)}
	(\mathbb{E}[\Es^{N-k}]-Z_{\beta,N}\Biggr)
	\int_{\T^{2k}}f(\hat x)\,d\hat x
	+ \int_{\T^{2k}}f(\hat x)\,d\hat x
	\]
	and we wish to estimate the contribution to the rate of convergence
	of the term in brackets in the formula above.
	Applying the same estimates of above to $f\equiv1$, we
	see that the term in brackets is, up to error terms of the same order
	of those in \eqref{eq:rate1} and \eqref{eq:rate2}, controlled by
	\begin{equation}\label{eq:rate4}
	e^{\frac12\beta V_m(0,0)}(\mathbb{E}[\Es^{N-k}]-\mathbb{E}[\Es^N])
	\lesssim\frac1N\log m
	\end{equation}
	The last inequality follows from \autoref{lem:Fbounds}.
	We finally choose $m=N^a$. With $a=1+\tfrac{r}4$,
	\eqref{eq:rate1} is controlled by $N^{-1/2}(\log N)^{3/2}$,
	as well as \eqref{eq:rate3} and \eqref{eq:rate4}.
	Likewise for \eqref{eq:rate2} if we choose the
	integer $n>1+\tfrac\beta{2\pi}a$.
\end{proof}

\bibliographystyle{plain}

\end{document}